\newtheorem{theorem}{Theorem}
\newtheorem{lemma}{Lemma}
\newtheorem{remark}{Remark}
\theoremstyle{definition}
\newtheorem{example}{Example}
\theoremstyle{remark}
\DeclareMathOperator{\argmax}{argmax} 
\newtheorem*{theorem*}{Theorem}
\newtheorem{assumption}{Assumption}
\setlist[description]{style=multiline}
\begin{document}
\sloppy

\title{Edge Computing in the Dark: Leveraging Contextual-Combinatorial Bandit and Coded Computing} 
\author{Chien-Sheng Yang,~\IEEEmembership{Student Member,~IEEE}, Ramtin Pedarsani,~\IEEEmembership{Member,~IEEE},\\ and A. Salman Avestimehr,~\IEEEmembership{Fellow,~IEEE}\vspace{-3mm}
        \thanks{
        This material is based upon work supported by Defense Advanced Research Projects Agency (DARPA) under Contract No. HR001117C0053, ARO award W911NF1810400, NSF grants CCF-1703575, CCF-1763673, CNS-2003035, CNS-2002874, ONR Award No. N00014-16-1-2189, UC Office of President under Grant LFR-18-548175 and a gift from Intel. The views, opinions, and/or findings expressed are those of the author(s) and should not be interpreted as representing the official views or policies of the Department of Defense or the U.S. Government. A preliminary part of this work was presented in IEEE ISIT 2020 \cite{yang2020online}.\textit{ (Corresponding author: Chien-Sheng Yang.)}}
        \thanks{C.-S.~Yang and A.~S.~Avestimehr are with the Department of Electrical and Computer Engineering, University of Southern California, Los Angeles, CA 90089 USA (e-mail: chienshy@usc.edu; avestimehr@ee.usc.edu).}%
\thanks{R.~Pedarsani is with the Department of Electrical and Computer
Engineering, University of California at Santa Barbara, Santa Barbara,
CA 93106, USA (e-mail: ramtin@ece.ucsb.edu).}
}


\maketitle
\begin{abstract}
With recent advancements in edge computing capabilities, there has been a significant increase in utilizing the edge cloud for event-driven and time-sensitive computations. However, large-scale edge computing networks can suffer substantially from unpredictable and unreliable computing resources which can result in high variability of service quality. 
%
We consider the problem of computation offloading over unknown edge cloud networks with a sequence of timely computation jobs. 
Motivated by the MapReduce computation paradigm, we assume that each computation job can be partitioned to smaller Map functions which are processed at the edge, and the Reduce function is computed at the user after the Map results are collected from the edge nodes. We model the service quality 
of each edge device as function of context. 
The user decides the computations to offload to each device with the goal of receiving a recoverable set of computation results in the given deadline. 
%
By leveraging the \emph{coded computing} framework in order to tackle failures or stragglers in computation, we formulate this problem using contextual-combinatorial multi-armed bandits (CC-MAB), and aim to maximize the cumulative expected reward. We propose an online learning policy called \emph{online coded edge computing policy}, which provably achieves asymptotically-optimal performance in terms of regret loss compared with the optimal offline policy for the proposed CC-MAB problem. 
In terms of the cumulative reward, it is shown that the online coded edge computing policy significantly outperforms other benchmarks via numerical studies. 
\end{abstract}
\begin{IEEEkeywords}
Edge Computing, Coded Computing, Online Learning,  Multi-Armed Bandits
\end{IEEEkeywords}
\section{Introduction}\label{sec:intro}
Recent advancements in edge cloud has enabled users to offload their computations of interest to the edge for processing. Specifically, there has been a significant increase in utilizing the edge cloud for event-driven and time-sensitive computations (e.g., IoT applications and cognitive services), in which the users increasingly demand timely services with deadline constraints, i.e., computations of requests have to be finished within specified deadlines. However, large-scale distributed computing networks can substantially suffer from unpredictable and unreliable computing infrastructure which can result in high variability of computing resources, i.e., service quality of the computing resources may vary over time. The speed variation has several causes including hardware failure, co-location of computation tasks, communication bottlenecks, etc \cite{zaharia2008improving,ananthanarayanan2013effective}. While edge computing has offered a novel framework for computing service provisioning, a careful design of task scheduling policy is still needed to guarantee the timeliness of task processing due to the increasing demand on real-time response of various applications and the unknown environment of the network.

To take advantage of the parallel computing resources for reducing the total latency, the applications are often modeled as a MapReduce computation model, i.e., the computation job can be partitioned to some smaller Map functions which can be distributedly processed by the edge devices. Since the data transmissions between the edge devices can result in large latency delay, it is often the case that the user computes the Reduce function on the results of the Map functions upon receiving the computation results of edge devices to complete the computation job.

In this paper, we study the problem of computation offloading over edge cloud networks with particular focus on \emph{unknown environment of computing resources} and \emph{timely computation jobs}. We consider a dynamic computation model, where a sequence of computation jobs needs to be computed over the (encoded) data that is distributedly stored at the edge nodes. More precisely, in an online manner, computation jobs with given deadlines are submitted to the edge network, i.e., each computation has to be finished within the given deadline. 
We assume the service quality (success probability of returning results back to the user in deadline) of each edge device is parameterized by a context (collection of factors that affect each edge device). The user aims at selecting edge devices from the available edge devices such that the user can receive a recoverable set of computation results in the given deadline. Our goal is then to design an efficient edge computing policy that maximizes the cumulative expected reward, where the expected reward collected at each round is a linear combination of the success probability of the computation and the amount of computational resources used (with negative sign).

One significant challenge in this problem is the joint design of (1) data storage scheme to provide robustness against unknown behaviors of edge devices; (2) computation offloading to 
edge device; and (3) an online learning policy for making the offloading decisions based on the past observed events. In our model, the computation capacities of the devices (e.g., how likely the computation can be returned to the user within the deadline) are unknown to the user. 

As the main contributions of the paper, we introduce a \emph{coded computing} framework in which the data is encoded and stored at the edge devices in order to provide robustness against unknown computation capabilities of the devices. The key idea of coded computing is to encode the data and design each worker's computation task such that the fastest responses of any $k$ workers out of total of $n$ workers suffice to complete the distributed computation, similar to classical coding theory where receiving any $k$ symbols out of $n$ transmitted symbols enables the receiver to decode the sent message. Under coded computing framework, we formulate a contextual-combinatorial multi-armed bandit (CC-MAB) problem for the edge computing problem, in which the Lagrange coding scheme is utilized for data encoding \cite{yu2019lagrange}. 

Then, we propose a policy called \emph{online coded edge computing policy}, and show that it achieves asymptotically optimal performance in terms of regret loss compared with the optimal offline policy for the proposed CC-MAB problem by the careful design of the policy parameters. To prove the asymptotic optimality of online coded edge computing policy, we divide the expected regret to three regret terms due to (1) exploration phases, (2) bad selections of edge devices in exploitation phases, and (3) good selections of edge devices in exploitation phases; then we bound these three regrets separately. 
%
 
In addition to proving the asymptotic optimality of online coded edge computing policy, we carry out numerical studies using the real world scenarios of Amazon EC2 clusters. In terms of the cumulative reward, the results show that the online coded edge computing policy significantly outperforms other benchmarks.

In the following, we summarize the key contributions in this paper:
\begin{itemize}[leftmargin=*]
    \item We formulate the problem of coded edge computing using the CC-MAB framework.
    \item We propose online coded edge computing policy, which is provably asymptotically optimal. 
  \item We show that the online coded edge computing policy outperforms other benchmarks via numerical studies. 
\end{itemize}
\subsection{Related Prior Work}
Next, we provide a brief literature review that covers three main lines of work: task scheduling over cloud networks, coded computing, and the multi-armed bandit problem. 

In the dynamic task scheduling problem, jobs arrive to the network according to a stochastic process, and get scheduled dynamically over time. The first goal in task scheduling is to find a throughput-optimal scheduling policy (see e.g., \cite{eryilmaz2005stable}), i.e. a policy that stabilizes the network, whenever it can be stabilized. For example, Max-Weight scheduling, first proposed in \cite{tassiulas1992stability,dai2005maximum}, is known to be throughput-optimal for wireless networks, flexible queueing networks \cite{neely2005dynamic}, data centers networks \cite{maguluri2012stochastic} and dispersed computing networks \cite{yang2019communication}. Moreover, there have been many works which focus on task scheduling problem with deadline constraints over cloud networks (see e.g., \cite{hoseinnejhad2017deadline}). 

Coded computing broadly refers to a family of techniques that utilize coding to inject computation redundancy in order to alleviate the various issues that arise in large-scale distributed computing. In the past few years, coded computing has had a tremendous success in various problems, such as straggler mitigation and bandwidth reduction (e.g., \cite{lee2018speeding,li2018fundamental,dutta2016short,lee2017high,yu2017polynomial,tandon2017gradient,li2017coding,li2020coded,prakash2020codedgraph,yu2020straggler}). Coded computing has also been expanded in various directions, such as heterogeneous networks (e.g.,~\cite{reisizadeh2019coded}), partial stragglers (e.g.,~\cite{ferdinand2018hierarchical}), secure and private computing (e.g.,~\cite{chen2018draco,yu2019lagrange,yang2021codedboolean,so2019codedprivateml,so2020scalable}), distributed optimization (e.g.,~\cite{karakus2017straggler}), federated learning (e.g.,~\cite{so2020turbo,prakash2020coded,prakash2020hierarchical}), blockchains (e.g.,~\cite{yu2020codedtree,li2020polyshardTIFS}). In a dynamic setting,~\cite{yang2019timely,yang2019timelyisit} consider the coded computing framework with deadline constraints and develops a learning strategy that can adaptively assign computation loads to cloud devices. In this paper, we go beyond the two states Markov model considered in ~\cite{yang2019timely,yang2019timelyisit}, and make a substantial progress by combining the ideas of coded computing with contextual-combinatorial MAB, which is a more general framework that does not make any strong assumption (e.g., Markov model) on underlying model for the speed of edge devices.

The multi-armed bandit (MAB) problem has been widely studied to address the critical tradeoff between exploration and exploitation in sequential decision making under uncertainty of environment~\cite{lai1985asymptotically}. The goal of MAB is to learn the single optimal arm among a set of candidate arms of a priori unknown rewards by sequentially selecting one arm each time and observing its realized reward~\cite{auer2002finite}. Contextual bandit problem extends the basic MAB by considering the context-dependent reward functions~\cite{li2010contextual,sen2017contextual,shariff2018differentially}. The combinatorial bandit problem is another extension of the MAB by allowing multiple-play (select a set of arms) each time~\cite{gai2012combinatorial,li2019combinatorial}. The contextual-combinatorial MAB problem considered in this paper has also received much attention recently~\cite{chen2019task,li2016contextual,muller2016context,qin2014contextual}.
However,~\cite{li2016contextual,qin2014contextual} assume that the reward of an action is a linear function of the contexts different from the reward function considered in our paper. ~\cite{muller2016context} assumes the arm set is fixed throughout the time but the arms (edge devices) may appear and disappear across the time in edge networks.

\cite{chen2019task} considers a CC-MAB problem for the vehicle cloud computing, in which the tasks are deadline-constrained. However, the task replication technique used in \cite{chen2019task} is to replicate the "whole job" to multiple edge devices  without taking advantage of parallelism of computational resources. Coded computing is a more general technique which allows the dataset to be first partitioned to smaller datasets and then encoded such that each device has smaller computation compared to \cite{chen2019task}. Moreover, the success probability term (for receiving any $k$ results out of $n$ results) of reward function considered in our paper is more general than the success probability term (for receiving any $1$ result out of $n$ results) of reward function considered in \cite{chen2019task}.

\section{System Model}\label{sec:sys}
\subsection{Computation Model}\label{subsec:comp_model}
We consider an edge computing problem, in which a user offloads its computation to an edge network in an online manner, and the computation is executed by the edge devices. In particular, there is a given deadline for each round of computation, i.e., computation has to be finished within the given deadline. 

As shown in Fig. \ref{fig:system}, the considered edge network is composed of a user node and a set of edge devices. There is a dataset $X_1,X_2,\dots,X_k$ where each $X_j$ is an element in a vector space $\mathbb{V}$ over a sufficiently large finite field $\mathbb{F}$. Each edge device prestores the data which can be possibly a function of $X_1,X_2,\dots,X_k$.

Let $\{1,2,\dots,T\}$ be the index of the user's computation jobs received by the edge network over $T$ time slots.  In each round $t$ (or time slot in a discrete-time system), the user has a computation job denoted by function $g_t$. Especially, we assume that function $g_t$ can be computed by $$ g_t(X_1,X_2,\dots, X_k) = h_t(f_t(X_1),f_t(X_2),\dots,f_t(X_k))$$
where function $g_t$ and $f_t$ (with degree \textrm{deg}$(f_t)$) are multivariate polynomial functions with vector coefficients. In such edge network and motivated by a MapReduce setting, the user is interested in computing Map functions $f_t(X_1),f_t(X_2),\dots,f_t(X_k)$ in each round $t$ and the user computes Reduce function $h_t$ on those results of Map functions to obtain $g_t(X_1,X_2,\dots,X_k)$. 
\begin{remark}
We note that the considered computation model naturally appears in many machine learning applications which use gradient-type algorithms. For example, in linear regression problems given $\vec{y}_j$ which is the vector of observed labels for data $X_j$, each worker $j$ computes $f_t(X_j)$ $=X_j^{\top}(X_j\vec{w}_t-\vec{y}_j)$ which is the gradient of the quadratic loss function $\frac{1}{2}\|X_j\vec{w}_t-\vec{y}_j\|^2$ with respect to the weight vector $\vec{w}_t$ in round $t$. To complete the update $\vec{w}_{t+1} = g_t(X_1,\dots,X_k) = \vec{w}_t -\beta_t\sum^k_{j=1}f_t(X_j)$, the user has to collect the computation results $f_t(X_1), f_t(X_2),\dots f_t(X_k)$.

 Moreover, the considered computation model also holds for various edge computing applications. For example, in a mobile navigation application, the goal of user is to compute the fastest route to its destination. Given a dataset containing the map information and the traffic conditions over a period of time, edge devices compute map functions which output all possible routes between the two end locations. After collecting the intermediate results from edge devices, the user computes the best route.
\end{remark}

\begin{figure}[t]
    \centering
    \includegraphics[width =0.8\columnwidth]{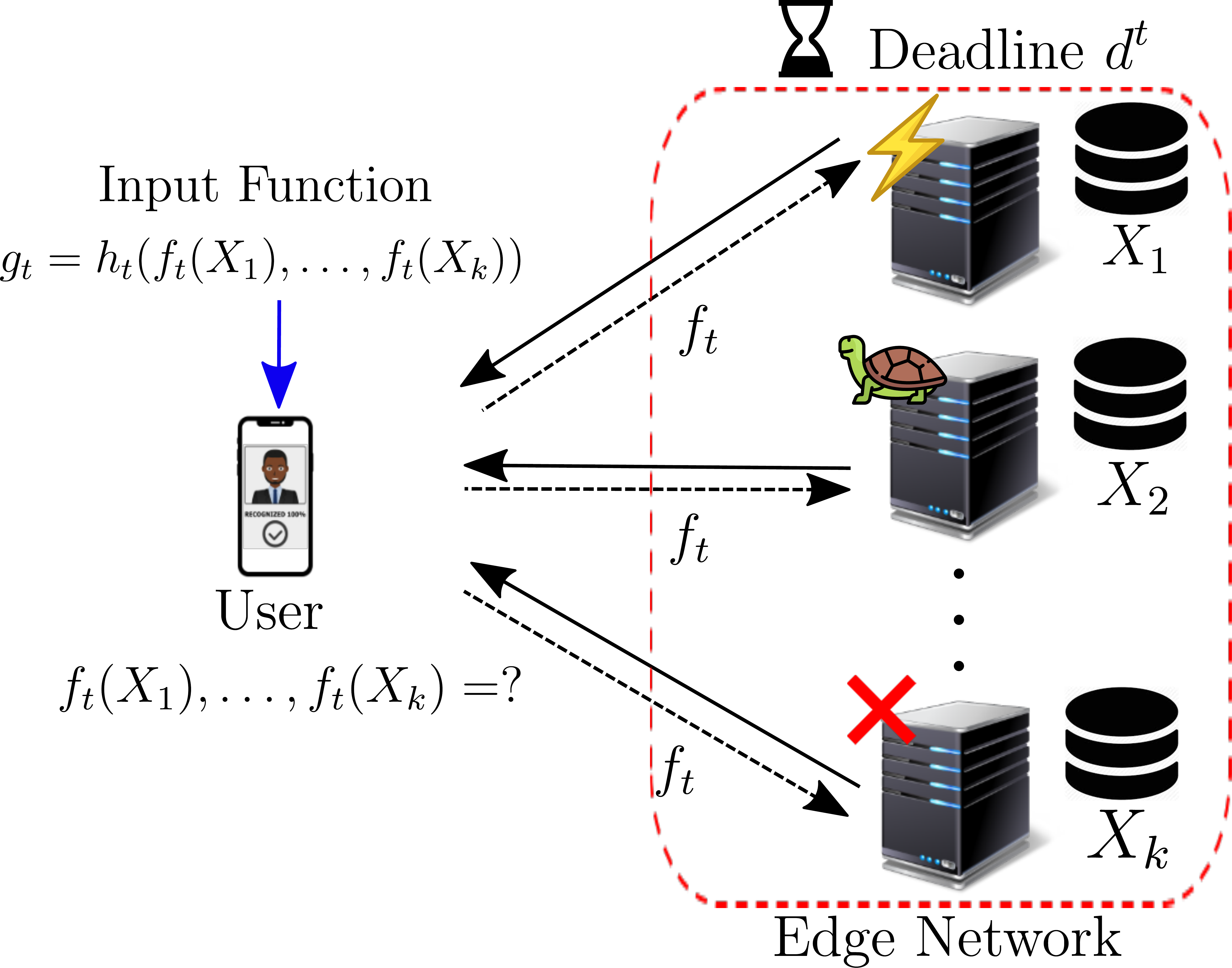}
    \caption{Overview of online computation offloading over an edge network with timely computation requests. In round $t$, the goal of user is to compute the Map functions $f_t(X_1),\dots,f_t(X_k)$ by the deadline $d^t$ using the edge devices.
    } 
    \label{fig:system}
\end{figure}
\subsection{Network Model}\label{subsec:net_model}
In an edge computing network, whether a computation result can be returned to the user depends on many factors. For example, the computation load of an edge device influences its runtime; the output size of the computation task affects the transmission delay, etc. Such factors are referred to as \emph{context} throughout the paper. The impact of each context on the edge devices is unknown to the user. More specifically, the computation service of each edge device is modeled as follows.

Let $\Phi_{T}$ be the context space of dimension $D_T$ includes $D_T$ different information of computation task, e.g., size of input/output, size of computation, and deadline, etc. Let $\Phi_S$ be the context space of dimension $D_S$ for edge devices which includes the information related to edge devices such as computation speed, bandwidth, etc. Let $\Phi = \Phi_T \times \Phi_S$ be the joint context space which is assumed to be bounded and thus can be defined by $\Phi = [0,1]^D$ and $D = D_T+D_S$ is the dimension of context space $\Phi$ without loss of generality. 

In each round $t$, let $\mathcal{V}^{t}$ denote the set of edge devices available to the user for computation, i.e., the available set of devices might change over time. Moreover, we denote by $b^t$ the budget (maximum number of devices to be used) in round $t$. The service delay (computation time plus transmission time) of each edge device $\nu$ is parameterized by a given context $\phi^t_{\nu} \in \Phi$. We denote by $c^t_{\nu}$ the service delay of edge device $\nu$, and $d^t$ the computation deadline in round $t$. Let $q^t_{\nu}= \mathbbm{1}_{\{c^t_{\nu} \leq d^t\}}$ be the indicator that the service delay of edge device $\nu$ is smaller than or equal to the given deadline $d^t$ in round $t$. Also, let $\mu(\phi^t_{\nu}) = \mathbb{E}[q^t_{\nu}] = \mathbb{P}(c^t_{\nu} \leq d^t)$ be the success probability that edge device $\nu$ returns the computation result back to the user within deadline $d^t$, and $\boldsymbol{\mu}^t = \{\mu(\phi^t_{\nu})\}_{\nu \in \mathcal{V}^t}$ be the collection of success probabilities of edge devices in round $t$. Let us illustrate the model through a simple example. 

\begin{example} \label{ex:net_model}
In~\cite{reisizadeh2019coded}, the shifted exponential distributions have been demonstrated to be a good fit for modeling the execution time of a node in cloud networks. Thus, we can model the success probability of an edge device as follows:
\begin{align*}
    \mu({\phi}^t) = \mathbb{P}(c^t \leq d^t) = 
    \begin{cases}
    1- e^{-\lambda^t(d^t-a^t)}&, \ d^t \geq a^t \\ 
    0&, \ a^t > d^t\geq 0,
    \end{cases}
\end{align*}
where the context space $\Phi$ consists of the deadline $d^t$, the shift parameter $a^t>0$, and the straggling parameter $\lambda^t >0$ associated with an edge device. 
\end{example}
\subsection{Problem Statement}
Let $\mathcal{V} = \{1,2,\dots,|\mathcal{V}|\}$ be the set of all edge devices in the network. Given context $\boldsymbol{\phi}^t = \{ \phi^t_{\nu} \}_{\nu \in \mathcal{V}^t}$ of the edge devices available to the user in round $t$, the goal of the user is to select a subset of edge devices from the available set of edge devices $\mathcal{V}^t \subseteq \mathcal{V}$, and decide what to be computed by each selected edge device, such that a \emph{recoverable} (or decodable as will be clarified later) set of computation results $f_t(X_1),\dots f_t(X_k)$ can be returned to the user within deadline $d^t$.

\section{Online Coded Edge Computing} \label{sec:LCC}
In this section, we introduce a coded computing framework for the edge computing problem, and formulate the problem as a contextual-combinatorial multi-armed bandit (CC-MAB) problem. Then, we propose a policy called \emph{online coded edge computing policy}, which is a context-aware learning algorithm. 
\subsection{Lagrange Coded Computing}
For the data storage of edge devices, we leverage a linear coding scheme called the Lagrange coding scheme \cite{yu2019lagrange} which is demonstrated to simultaneously provide resiliency, security, and privacy in distributed computing. We start with an illustrative example.

In each round $t$, we consider a computation job which consists of computing quadratic functions $f_t(X_j) = X_j^{\top}(X_j\vec{w}_t-\vec{y}_j)$ over available edge devices $\mathcal{V}^t = \{1,2,\dots,6\}$, where input dataset $X$ is partitioned to $X_1,X_2$. Then, we define function $m$ as follows:
\begin{align}
    m(z) \triangleq X_1\frac{z-1}{0-1}+X_2 \frac{z-0}{1-0} = z(X_2-X_1)+X_1,
\end{align}
in which $m(0) = X_1$ and $m(1)=X_2$. Then, we encode $X_1$ and $X_2$ to $\tilde{X}_{\nu} = m(\nu-1)$, i.e., $\tilde{X}_1=X_1$, $\tilde{X}_2=X_2$, $\tilde{X}_3=-X_1+2X_2$, $\tilde{X}_4=-2X_1+3X_2$, $\tilde{X}_5=-3X_1+4X_2$ and $\tilde{X}_6=-4X_1+5X_2$. Each edge device $\nu \in \{1,2,\dots,6\}$ prestores an encoded data chunk $\tilde{X}_{\nu}$ locally. If edge device $\nu$ is selected in round $t$, it computes $f_t(\tilde{X}_{\nu}) = \tilde{X}_{\nu}^{\top}(\tilde{X}_{\nu}\vec{w}_t-\vec{y}_{\nu})$ and returns the result back to the user upon its completion. We note that $f_t(\tilde{X}_{\nu}) = f_t(m(\nu-1))$ is an evaluation of the composition polynomial $f_t(m(z))$, whose degree at most $2$, which implies that $f_t(m(z))$ can be recovered by any $3$ results via polynomial interpolation. Then we have $f_t(X_1) = f_t(m(0))$ and $f_t(X_2) = f_t(m(1))$.

Formally, we describe Lagrange coding scheme as follows:\\
We first select $k$ distinct elements $\beta_1, \beta_2,\dots, \beta_k$ from $\mathbb{F}$, and let $m$ be the respective \emph{Lagrange interpolation polynomial} 
\begin{align}
    m(z) \triangleq \sum^{k}_{j=1}X_j \prod_{l \in [k]\backslash \{j\}}\frac{z-\beta_l}{\beta_j - \beta_l},
\end{align}
where $u: \mathbb{F} \rightarrow \mathbb{V}$ is a polynomial of degree $k-1$ such that $m(\beta_j) = X_j$. Recall that $\mathcal{V} = \cup^T_{t=1}\mathcal{V}^t$ which is the set of all edge devices. To encode input $X_1,X_2,\dots,X_k$, we select $|\mathcal{V}|$ distinct elements $\alpha_1,\alpha_2,\dots,\alpha_{|\mathcal{V}|}$ from $\mathbb{F}$, and encode $X_1,X_2,\dots,X_k$ to $\tilde{X}_v=m(\alpha_v)$ for all $v \in [|\mathcal{V}|]$, i.e.,
\begin{align}
    \tilde{X}_v = m(\alpha_v) \triangleq  \sum^{k}_{j=1}X_j \prod_{l \in [k]\backslash \{j\}}\frac{\alpha_v-\beta_l}{\beta_j - \beta_l}.
\end{align}
Each edge device $\nu \in \mathcal{V}$ stores $\tilde{X}_{\nu}$ locally. If edge device $\nu$ is selected in round $t$, it computes $f_t(\tilde{X}_{\nu})$ and returns the result back to the user upon its completion. Then , the optimal recovery threshold $Y^t$ using Lagrange coding scheme is 
\begin{align}
    Y^t = (k-1)\textrm{deg}(f_t)+1 \label{eq:recovery}
\end{align}
which guarantees that the computation tasks $f_t(X_1),\dots,f_t(X_k)$ can be recovered when the user receives any $Y^t$ results from the edge devices. The encoding of Lagrange coding scheme is oblivious to the	computation task $f_t$. Also, decoding and encoding process in Lagrange coding scheme rely on polynomial interpolation and evaluation which can be done efficiently.
\begin{remark}
We note that the data newly generated in edge device can be encoded and distributed to other devices at off-peak time. Especially, a key property of LCC is that the encoding process can be done incrementally, i.e., when there are some new added datasets, the update of encoded data can be done incrementally by encoding only on the new data instead of redoing the encoding on all the datasets. For example, let us consider the case that each data $X_j$ is represented by a vector. When there is a new generated data element $x_j$ added to each data $X_j$, we just encode new data elements $x_1,x_2,\dots,x_k$ to $\tilde{x}_1,\tilde{x}_2\dots$ and the new encoded data can be obtained by appending the new encoded data to old encoded data vectors $\tilde{X}_1,\tilde{X}_2,\dots$.
\end{remark}
\subsection{CC-MAB for Coded Edge Computing}
Now we consider a coded computing framework in which the Lagrange coding scheme is used for data encoding, i.e., each edge device $\nu$ prestores encoded data $\tilde{X}_{\nu}$. The encoding process is only performed once for dataset $X_1,\dots,X_k$. After Lagrange data encoding, the size of input data and computation of each user do not change, i.e., context $\phi^t_{\nu}$ of each edge device $\nu$ remains the same.  

More specifically, we denote by $\mathcal{A}^t$ the set of devices which are selected in round $t$ for computation. In each round $t$, the user picks a subset of devices $\mathcal{A}^t$ from all available devices $\mathcal{V}^t$, and we call $\mathcal{A}^t \subseteq \mathcal{V}^t$ the ``offloading decision''. The reward function $r(\mathcal{A}^t)$ achieved by offloading decision $\mathcal{A}^t$ is composed of the reward term and the cost term, which is defined as follows:
\begin{align}
    r(\mathcal{A}^t) = 
    \begin{cases}
    1-\eta |\mathcal{A}^t|, \ \text{if} \ \sum_{\nu \in \mathcal{A}^t} q^t_{\nu} \geq Y^t  \\
    -\eta |\mathcal{A}^t| , \ \text{if} \ \sum_{\nu \in \mathcal{A}^t} q^t_{\nu} < Y^t 
    \end{cases}
        \end{align}
where the term $|\mathcal{A}^t|$ captures the cost of using offloading decision $\mathcal{A}^t$ with the unit cost $\eta$ for using one edge device, and $Y^t$ is the optimal recovery threshold defined in (\ref{eq:recovery}). More precisely, the reward term is equal to $1$ if the total number of received results is greater than the optimal recovery threshold, i.e., $\sum_{\nu \in \mathcal{A}^t}q^t_{\nu} \geq Y^t$; otherwise the reward term is equal to $0$. On the other hand, the cost term is defined as $-\eta |\mathcal{A}^t|$ which is the cost of using $\mathcal{A}^t$. 

Then, the expected reward denoted by $u(\boldsymbol{\mu}^t,\mathcal{A}^t)$ in round $t$ can be rewritten as follows:
\begin{align}
    u(\boldsymbol{\mu}^t,\mathcal{A}^t) = &\sum^{|\mathcal{A}^t|}_{s=Y^t}\sum_{\mathcal{A} \subseteq \mathcal{A}^t,|\mathcal{A}|=s} \prod_{\nu \in \mathcal{A}}\mu(\phi^t_{\nu}) \prod_{\nu \in \mathcal{A}^t \backslash \mathcal{A}}(1- \mu(\phi^t_{\nu})) \nonumber\\  & -\eta |\mathcal{A}^t| \label{eq:reward}
\end{align}
where the first term of the expected reward of an offloading decision is the success probability that there are at least $Y^t$ computation results received by the user for LCC decoding. 

Consider an arbitrary sequence of computation jobs indexed by $\{1,2,\dots,T\}$ for which the user makes offloading decisions $\{\mathcal{A}^t\}^T_{t=1}$. To maximize the expected cumulative reward, we introduce a contextual-combinatorial multi-armed bandit (CC-MAB) problem for coded edge computing defined as follows:

\textbf{CC-MAB for Coded Edge Computing:}
\begin{align}
    & \max_{\{\mathcal{A}^t\}^T_{t=1}} \sum^T_{t=1} u(\boldsymbol{\mu}^t,\mathcal{A}^t) \\
    & \text{s.t.} \ \mathcal{A}^t \subseteq \mathcal{V}^t, \ |\mathcal{A}^t| \leq b^t, \ \forall t \in [T] \label{eq:budget}
\end{align}
where the constraint (\ref{eq:budget}) indicates that the number of edge devices in $\mathcal{A}^t$ cannot exceed the budget $b^t$ in round $t$. The proposed CC-MAB problem is equivalent to solving an independent subproblem in each round $t$ as follows:
\begin{align}
    \max_{\mathcal{A}^t}&\sum^{|\mathcal{A}^t|}_{s=Y^t}\sum_{\mathcal{A} \subseteq \mathcal{A}^t,|\mathcal{A}|=s} \prod_{\nu \in \mathcal{A}}\mu(\phi^t_{\nu}) \prod_{\nu \in \mathcal{A}^t \backslash \mathcal{A}}(1- \mu(\phi^t_{\nu}))  -\eta |\mathcal{A}^t| \nonumber\\
     \text{s.t.} \ & \mathcal{A}^t \subseteq \mathcal{V}^t; \ |\mathcal{A}^t| \leq b^t. \nonumber
\end{align}
\begin{remark}
We note that the proposed CC-MAB not only works for LCC but also for any other coding schemes. In this paper, we focus on LCC since LCC is a universal and optimal encoding technique for arbitrary multivariate polynomial computations.
\end{remark}
\subsection{Optimal Offline Policy}
We now assume that the success probability of each edge device $\nu \in \mathcal{V}^t$ is known to the user. In round $t$, to find the optimal $\mathcal{A}^{t*}$, we present the following intuitive lemma proved in Appendix \ref{proof_lemma_optimalset}. 
\begin{lemma} \label{lemma:optimalset}
Without loss of generality, we assume $\mu(\phi^t_{1}) \geq \mu(\phi^t_{2})  \geq \dots \geq \mu(\phi^t_{|\mathcal{V}^t|})$ in round $t$. Considering all possible sets $\mathcal{A}^t_g \subseteq \mathcal{V}^t$ with fixed cardinality $n_g$, the optimal $\mathcal{A}^{t*}_g$ with cardinality $n_g$ that achieves the largest expected reward $u(\boldsymbol{\mu}^t,\mathcal{A}^t_g)$ is 
\begin{align}
    \mathcal{A}^{t*}_g =\{1,2,\dots,n_g\}
\end{align}
which represents the set of $n_g$ edge devices having largest success probability $\mu(\phi^t_{\nu})$ among all the edge devices. 
\end{lemma}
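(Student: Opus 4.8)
The plan is to reduce the maximization to its success-probability term and then to prove optimality of the top-$n_g$ set by a swapping (exchange) argument. First I would observe that every candidate set $\mathcal{A}^t_g$ in the statement has the same cardinality $n_g$, so the cost term $-\eta|\mathcal{A}^t_g| = -\eta n_g$ is constant across all of them. Hence maximizing $u(\boldsymbol{\mu}^t,\mathcal{A}^t_g)$ over sets of cardinality $n_g$ is equivalent to maximizing only the first term, which by definition equals $\mathbb{P}\big(\sum_{\nu \in \mathcal{A}^t_g} q^t_\nu \geq Y^t\big)$, the probability that at least $Y^t$ of the selected (independent) edge devices return their results within the deadline.

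The core of the argument is a monotonicity claim obtained by isolating a single device. Fix any set $S$ of $n_g-1$ devices and let $N_S = \sum_{\nu \in S} q^t_\nu$ be the number of successes among them. If we add one more device whose success indicator is $q$ with $\mathbb{P}(q=1)=p$, independence yields
\[
\mathbb{P}(N_S + q \geq Y^t) = p\,\mathbb{P}(N_S \geq Y^t-1) + (1-p)\,\mathbb{P}(N_S \geq Y^t).
\]
Writing $A = \mathbb{P}(N_S \geq Y^t-1)$ and $B = \mathbb{P}(N_S \geq Y^t)$, the right-hand side equals $B + p(A-B)$. Since $\{N_S \geq Y^t\} \subseteq \{N_S \geq Y^t-1\}$, we have $A - B = \mathbb{P}(N_S = Y^t-1) \geq 0$, so the objective is an affine and non-decreasing function of the added device's success probability $p$.

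I would then run the exchange argument. Suppose an optimal cardinality-$n_g$ set $\mathcal{A}^{t*}_g$ differs from $\{1,\dots,n_g\}$; then there exist a device $i \leq n_g$ with $i \notin \mathcal{A}^{t*}_g$ and a device $j > n_g$ with $j \in \mathcal{A}^{t*}_g$, and the ordering gives $\mu(\phi^t_i) \geq \mu(\phi^t_j)$. Applying the monotonicity claim with $S = \mathcal{A}^{t*}_g \setminus \{j\}$, swapping $j$ out for $i$ does not decrease the success probability, and therefore does not decrease $u(\boldsymbol{\mu}^t,\mathcal{A}^t_g)$. Iterating this swap at most $n_g$ times transforms any optimal set into $\{1,\dots,n_g\}$ without ever decreasing the objective, which establishes that $\{1,\dots,n_g\}$ attains the maximum.

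The main obstacle, modest as it is, is the clean conditioning step: decomposing $\mathbb{P}(N_S + q \geq Y^t)$ according to whether the isolated device succeeds and recognizing that the coefficient of $p$ is the non-negative point mass $\mathbb{P}(N_S = Y^t-1)$. Once this affine-monotonicity in the per-device success probability is established, the swapping argument is entirely routine, and no regularity assumptions beyond independence of the devices are needed.
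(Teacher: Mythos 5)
Your proposal is correct and follows essentially the same route as the paper's proof in Appendix E: both condition on the success indicator of a single swapped-in/out device, observe that the objective is affine and non-decreasing in that device's success probability because the coefficient $\mathbb{P}(N_S = Y^t-1)$ is non-negative, and conclude by an exchange argument. Your write-up is if anything slightly cleaner, since you iterate the swap rather than framing the final inequality (which is only non-strict) as a contradiction.
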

By Lemma \ref{lemma:optimalset}, to find the optimal set $\mathcal{A}^{t*}$, we can only focus on finding the optimal size of $\mathcal{A}^t$. Since there are only $b^t$ choices for size of $|\mathcal{A}^t|$ (i.e., $1,2,\dots,{b^t}$), this procedure can be done by a linear search with the complexity linear in the number of edge devices $|\mathcal{V}^t|$. We present the optimal offline policy in Algorithm \ref{alg:optimal_oracle}.

\begin{remark}
We note that the expected reward function considered in \cite{chen2019task} is a submodular function, which can be maximized by a greedy algorithm. However, the expected reward function defined in \eqref{eq:reward} is more general which cannot be maximized by the greedy algorithm. More specifically, one can show that the expected reward defined in equation \eqref{eq:reward} is not submodular by checking the property of submodular functions, i.e., for all possible subsets $\mathcal{A} \subseteq \mathcal{B} \subseteq \mathcal{V}$, $u(\boldsymbol{\mu},\{\nu\} \cup \mathcal{A}) - u(\boldsymbol{\mu},\mathcal{A})\geq u(\boldsymbol{\mu},\{\nu\} \cup \mathcal{B}) - u(\boldsymbol{\mu},\mathcal{B})$ does not hold. Without the property of submodularity, Lemma \ref{lemma:optimalset} enables us to maximize equation \eqref{eq:reward} by a linear search.
\end{remark}
Let $\{\mathcal{A}^t\}^T_{t=1}$ be the offloading decisions derived by a certain policy. The performance of this policy is evaluated by comparing its loss with respect to the optimal offline policy. This loss is called the regret of the policy which is formally defined as follows:
\begin{align}
    R(T) & = \mathbb{E}\big[ \sum^T_{t=1}r(\mathcal{A}^{t*})-r(\mathcal{A}^t)\big]\\
    & = \sum^T_{t=1} u(\boldsymbol{\mu}^t,\mathcal{A}^{t*}) - u(\boldsymbol{\mu}^t,\mathcal{A}^{t}).
\end{align}
In general, the user does not know in advance the success probabilities of edge devices due to the uncertainty of the environment of edge network. In the following subsection, we will propose an online learning policy for the proposed CC-MAB problem which enables the user to learn the success probabilities of edge devices over time by observing the service quality of each selected edge device, and then make offloading decisions adaptively.
\begin{algorithm}[t]
\SetAlgoLined
\textbf{Input:} $\mathcal{V}^t,b^t,Y^t,\mu(\phi^t_{\nu}), \nu \in \mathcal{V}^t$\;
\textbf{Initialization:} $\mathcal{A} = \emptyset$, $\mathcal{A}_{\textrm{opt}} = \emptyset$, $u_{\textrm{opt}} = 0$\;
 Sort $\boldsymbol{\mu}^t:$ $\mu(\phi^t_{1}) \geq \mu(\phi^t_{2})  \geq \dots \geq \mu(\phi^t_{|\mathcal{V}^t|})$\;
 $\mathcal{A} \leftarrow \{1,2,\dots,Y^t\}$ \;
 $\mathcal{A}_{\textrm{opt}} \leftarrow \{1,2,\dots,Y^t\}$ \;
 $u_{\textrm{opt}} \leftarrow u(\boldsymbol{\mu}^t,\mathcal{A})$\;
 \For{$z \leftarrow Y^t+1$ \KwTo $b^t$}{
 $\mathcal{A} \leftarrow \mathcal{A} \cup \{z\}$\;
  \If{$u(\boldsymbol{\mu}^t,\mathcal{A})>u_{\textrm{opt}}$}{
  $\mathcal{A}_{\textrm{opt}} \leftarrow \mathcal{A}$\;
  $u_{\textrm{opt}} \leftarrow u(\boldsymbol{\mu}^t,\mathcal{A})$
  }
                                                                        }
 \Return $\mathcal{A}_{\textrm{opt}}$
 \caption{Optimal Offline Policy}\label{alg:optimal_oracle}
\end{algorithm}
\subsection{Online Coded Edge Computing Policy}
Now, we describe the proposed online edge computing policy. The proposed policy has two parameters $h_T$ and $K(t)$ to be designed, where $h_T$ decides how we partition the context space, and $K(t)$ is a deterministic and monotonically increasing function, used to identify the under-explored context. The proposed online coded edge computing policy (see Algorithm \ref{alg:online_LCC}) is performed as follows:

\textbf{Initialization Phase:} Given parameter $h_T$, the proposed policy first creates a partition denoted by $\mathcal{P}_T$ for the context space $\Phi$, which splits $\Phi$ into $(h_T)^D$ sets. Each set is a $D$-dimensional hypercube of size $\frac{1}{h_T} \times \dots \times \frac{1}{h_T}$. For each hypercube $p \in \mathcal{P}_T$, the user keeps a counter $C^t(p)$ which is the number of selected edge devices that have context $\phi^t_{\nu}$ in hypercube $p$ before round $t$. Moreover, the policy also keeps an estimated success probability denoted by $\hat{\mu}^t(p)$ for each hypercube $p$. Let $\mathcal{Q}^t(p) = \{q^{\tau}_{\nu}:\phi^{\tau}_{\nu}\in p,\nu \in \mathcal{A}^{\tau},\tau = 1,\dots,t-1 \}$ be the set of observed indicators (successful or not) of edge devices with context in $p$ before round $t$. Then, the estimated success probability for edge devices with context $\phi^t_{\nu} \in p$ is computed by $\hat{\mu}^t(p) = \frac{1}{C^t(p)}\sum_{q \in \mathcal{Q}^t(p)}q$.

In each round $t$, the proposed policy has the following phases: 

\textbf{Hypercube Identification Phase:} Given the contexts of all available edge devices $\boldsymbol{\phi}^t = \{\phi^t_{\nu}\}_{\nu \in \mathcal{V}^t}$, the policy determines the hypercube $p^t_{\nu} \in \mathcal{P}_T$ for each context $\phi^t_{\nu}$ such that $\phi^t_{\nu}$ is in $p^t_{\nu}$. We denote by $\boldsymbol{p}^t = \{p^t_{\nu}\}_{\nu \in \mathcal{V}^t}$ the collection of these identified hypercubes in round $t$. To check whether there exist hypercubes $p \in \boldsymbol{p}^t$ that have not been explored sufficiently, we define the under-explored hypercubes in round $t$ as follows:
\begin{align}
    \mathcal{P}^{ue,t}_T = \{p \in \mathcal{P}_T: \exists \nu \in \mathcal{V}^t,\phi^t_{\nu} \in p,C^t(p) \leq K(t)\}.
\end{align}
Also, we denote by $\mathcal{V}^{ue,t}$ the set of edge devices which fall in the under-explored hypercubes, i.e., $\mathcal{V}^{ue,t} = \{\nu \in \mathcal{V}^t: p^t_{\nu} \in \mathcal{P}^{ue,t}_T\}$. Depending on  $\mathcal{V}^{ue,t}$ in round $t$, the proposed policy then either enters an exploration phase or an exploitation phase. 

 \textbf{Exploration Phase:} If $\mathcal{V}^{ue,t}$ is non-empty, the policy enters an exploration phase. If set $\mathcal{V}^{ue,t}$ contains at least $b^t$ edge devices (i.e., $|\mathcal{V}^{ue,t}| \geq b^t$), then the policy randomly selects $b^t$ edge devices from $\mathcal{V}^{ue,t}$. If $\mathcal{V}^{ue,t}$ contains fewer than $b^t$ edge devices ($|\mathcal{V}^{ue,t}| < b^t$), then the policy selects all edge devices from $\mathcal{V}^{ue,t}$. To fully utilize the budget $b^t$, the remaining $(b^t- |\mathcal{V}^{ue,t}|)$ ones are picked from the edge devices with the highest estimated success probability among the remaining edge devices in $\mathcal{V}^t \backslash \mathcal{V}^{ue,t}$.

\textbf{Exploitation Phase:}  If $\mathcal{V}^{ue,t}$ is empty, the policy enters an exploitation
phase and it selects $\mathcal{A}^t$ using the optimal offline policy based on the estimated success probabilities $\hat{\boldsymbol{\mu}}^t = \{\hat{\mu}^t(p^t_{\nu})\}_{\nu \in \mathcal{V}^t}$.

\textbf{Update Phase:} After selecting the edge devices, the proposed policy observes whether each selected edge device returns the result within the deadline; then, it updates $\hat{\mu}^t(p^t_{\nu})$ and $C^t(p^t_{\nu})$ of each hypercube $p^t_{\nu} \in \mathcal{P}_T$.

The following example illustrates how the policy works given parameters $h_T$ and $K(t)$. 
\begin{example} \label{ex:policy}
Consider the edge computing network in which the success probability of an edge device is defined by a shifted exponential distribution as defined in Example \ref{ex:net_model}. It can be shown that the H\"{o}lder condition with $\alpha = 1$ holds. Then, we have parameters $h_T = \lceil T^{\frac{1}{6}}\rceil $ and $K(t) = t^{\frac{1}{3}}\log{(t)}$. We assume that the online coded edge computing policy is run over time horizon $T = 1000$. Then, we have $h_T = 4$. Before running the policy, we create $\mathcal{P}_T$ by partitioning the domain of each context (i.e., deadlines, shift parameters and straggling parameters) into $h_T = 4$ intervals, which generates totally $64$ sets. We keep a counter $C^t(p)$ for each generated hypercube $p \in \mathcal{P}_T$. In hypercube identification phase, if there exists edge device $\nu$ with $\phi^t_{\nu}$ such that $\phi^t_{\nu}$ is located in hypercube $p$ and the counter $C^t(p)$ is smaller than $K(t)$, then $p$ is the under-explored hypercube. The policy will proceed to either exploration phase or exploitation phase depending on whether under-explored hypercube exists.   
\end{example}
\begin{algorithm}[t] 
\SetAlgoLined
\textbf{Input:} $T,h_T,K(t)$\;
 \textbf{Initialization:} $\mathcal{P}_T$; $C(p) = 0$,  $\hat{\mu}(p)=0$, $\forall p \in \mathcal{P}_T$ \;
 \For{$t\leftarrow 1$ \KwTo $T$}{
 Observe edge device $\mathcal{V}^t$ and contexts $\boldsymbol{\phi}^t$ \;
 Find $\boldsymbol{p}^t = \{p^t_{\nu}\}_{\nu \in \mathcal{V}^t}$, $p^t_{\nu}\in \mathcal{P}_T$ such that $\phi^t_{\nu} \in p^t_{\nu}$ \;
 Identify $\mathcal{P}^{ue,t}$ and $\mathcal{V}^{ue,t}$\;
 \eIf{$\mathcal{P}^{ue,t} \neq \emptyset$}{
 \eIf{$|\mathcal{V}^{ue,t}|\geq b^t$}{$\mathcal{A}^t \leftarrow$ randomly pick $b^t$ edge devices in $\mathcal{V}^{ue,t}$\;}
 {$\mathcal{A}^t \leftarrow $ pick all edge devices in $\mathcal{V}^{ue,t}$ and other $(b^t-|\mathcal{V}^{ue,t}|)$ ones with the largest $\hat{\mu}(p^t_{\nu})$ in $\mathcal{V}^t \backslash \mathcal{V}^{ue,t}$}}
 {$\mathcal{A}^t \leftarrow $ obtained by Algorithm \ref{alg:optimal_oracle} based on $\hat{\boldsymbol{\mu}}^t$ and $b^t$}
 \For{each edge device $\nu \in \mathcal{A}^t$}{
 Observe $q^t_{\nu}$ of edge device $\nu$\;
 
 Update $\hat{\mu}(p^{t}_{\nu}) = \frac{ \hat{\mu}(p^{t}_{\nu})C(p^{t}_{\nu})+q^t_{\nu}}{C(p^t_{\nu})+1}$\;
 Update $C(p^{t}_{\nu}) = C(p^{t}_{\nu})+1$\;
 }
                                                                        }
 \caption{Online Coded Edge Computing Policy}\label{alg:online_LCC}
\end{algorithm}
\section{Asymptotic Optimality of Online Coded Edge Computing Policy}\label{sec:optimality}
In this section, by providing the design of policy parameters $h_T$ and $K(t)$, we show that the online coded edge computing policy achieves a sublinear regret in the time horizon $T$ which guarantees an asymptotically optimal performance, i.e., $\lim_{T \rightarrow \infty}\frac{R(T)}{T} = 0$.

To conduct the regret analysis for the proposed CC-MAB problem, we make the following assumption on the success probabilities of edge devices in which the devices' success probabilities are equal if they have the same contexts. This natural property is formalized by the H\"{o}lder condition defined as follows:
\begin{assumption}[H\"{o}lder Condition] \label{ass:holder}
A real function $f$ on $D$-dimensional Euclidean space satisfies a H\"{o}lder condition, when there exist $L>0$ and $\alpha > 0$ for any two contexts $\phi, \phi' \in \Phi$, such that $|f(\phi) - f(\phi^{'})| \leq L \parallel \phi - \phi^{'}\parallel^{\alpha}$, where $\parallel \cdot \parallel$ is the Euclidean norm.
\end{assumption}
Under Assumption \ref{ass:holder}, we choose parameters $h_T = \lceil T^{\frac{1}{3\alpha+D}}\rceil$ for the partition of context space $\Phi$ and $K(t) = t ^{\frac{2\alpha}{3\alpha +D}}\log{(t)}$ in round $t$ for identifying the under-explored hypercubes of the context. We present the following theorem which shows that the proposed online coded edge computing policy has a sublinear regret upper bound.
\begin{theorem}[Regret Upper Bound] \label{thm:LCC}
Let $K(t) = t ^{\frac{2\alpha}{3\alpha +D}}\log{(t)}$ and $h_T = \lceil T^{\frac{1}{3\alpha+D}}\rceil$. If the H\"{o}lder condition holds, the regret $R(T)$ is upper-bounded as follows: 
\begin{align}
      R&(T)\leq (1+\eta B) 2^D (T^{\frac{2\alpha+D}{3\alpha+D}}\log{(T)}+T^{\frac{D}{3\alpha+D}}) \nonumber \\& + (1+\eta B) B \frac{\pi^2}{3} \sum^B_{k=1} \binom{|\mathcal{V}|}{k} \nonumber\\ &+ (3LD^{\frac{\alpha}{2}} + \frac{6\alpha+2D}{2\alpha+D})BM T^{\frac{2\alpha+D}{3\alpha+D}}, \nonumber
\end{align}
where $B = \max_{1\leq t \leq T} b^t$ and $M = \max_{1\leq t \leq T}\binom{B-1}{Y^t-1}$. The dominant order of the regret $R(T)$ is $O( T^{\frac{2\alpha+D}{3\alpha+D}}\log{(T)})$ which is sublinear to $T$.
\end{theorem}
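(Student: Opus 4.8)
The plan is to follow the three-way decomposition announced in the introduction and write the total regret as $R(T) = R_e(T) + R_b(T) + R_g(T)$, where $R_e(T)$ gathers the rounds in which the policy explores (those with $\mathcal{P}^{ue,t} \neq \emptyset$), while $R_b(T)$ and $R_g(T)$ gather the exploitation rounds according to whether a concentration event for the estimates fails (\emph{bad}) or holds (\emph{good}). Since $u(\boldsymbol{\mu}^t,\mathcal{A}^t)$ lies in $[-\eta B, 1]$, the per-round regret never exceeds $1 + \eta B$, and this constant multiplies the two terms $R_e$ and $R_b$ that are controlled by merely counting rounds. For $R_e(T)$, I would first note that $K(t)$ is increasing, so once a hypercube $p$ reaches $C^t(p) > K(T) \ge K(t)$ it is never again under-explored up to horizon $T$. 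Because every exploration round selects at least one device from $\mathcal{V}^{ue,t}$, each such round increments at least one counter of an under-explored hypercube, and a fixed $p$ can be incremented while under-explored at most $\lfloor K(T)\rfloor + 1$ times; hence the number of exploration rounds is at most $(h_T)^D(\lfloor K(T)\rfloor+1)$. Using $|\mathcal{P}_T|=(h_T)^D$, $h_T = \lceil T^{1/(3\alpha+D)}\rceil \le 2T^{1/(3\alpha+D)}$, and $K(T)=T^{2\alpha/(3\alpha+D)}\log T$, this count is at most $2^D(T^{(2\alpha+D)/(3\alpha+D)}\log T + T^{D/(3\alpha+D)})$, and multiplying by $1+\eta B$ gives the first term of the bound.

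For the exploitation terms I would first establish the deterministic estimate that propagates estimation error through the combinatorial reward. Writing the success term as $G(\boldsymbol{\mu},\mathcal{A}) = \mathbb{P}(\sum_{\nu\in\mathcal{A}} q_\nu \ge Y^t)$ with independent $q_\nu \sim \mathrm{Bernoulli}(\mu(\phi^t_\nu))$, its partial finite difference in a single $\mu(\phi^t_\nu)$ equals the pivotal probability $\mathbb{P}(\sum_{\nu'\in\mathcal{A}\setminus\{\nu\}} q_{\nu'} = Y^t-1)$, which is a sum of at most $\binom{|\mathcal{A}|-1}{Y^t-1}\le M$ nonnegative products each at most $1$. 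Telescoping one coordinate at a time and recalling that the cost term $-\eta|\mathcal{A}|$ cancels, this yields $|u(\boldsymbol{\mu},\mathcal{A}) - u(\boldsymbol{\mu}',\mathcal{A})| \le MB\max_{\nu}|\mu(\phi^t_\nu)-\mu'(\phi^t_\nu)|$. I then define the good event as $|\hat{\mu}^t(p^t_\nu)-\mu(\phi^t_\nu)| \le H(t) + LD^{\alpha/2}h_T^{-\alpha}$ for every available $\nu$, where $H(t)=t^{-\alpha/(3\alpha+D)}$; the Hölder term bounds the bias of $\hat{\mu}^t(p)$ since all sampled contexts lie in a hypercube of diameter $\sqrt{D}/h_T$, and $H(t)$ is the fluctuation. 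On the good event, inserting and subtracting $u(\hat{\boldsymbol{\mu}}^t,\cdot)$ and using optimality of $\mathcal{A}^t$ for $\hat{\boldsymbol{\mu}}^t$ (so the middle difference is nonpositive) bounds the per-round regret by a constant multiple of $MB(H(t)+LD^{\alpha/2}h_T^{-\alpha})$. Summing over $t$ with $\sum_{t\le T}H(t) \le \frac{3\alpha+D}{2\alpha+D}T^{(2\alpha+D)/(3\alpha+D)}$ and $\sum_{t\le T}h_T^{-\alpha} \le T^{(2\alpha+D)/(3\alpha+D)}$ reproduces the $R_g(T)$ term, up to the bookkeeping of the stated constants.

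For $R_b(T)$ I would bound the probability that the good event fails in an exploitation round. Because every relevant hypercube then satisfies $C^t(p) > K(t)$, a Chernoff--Hoeffding inequality gives $\mathbb{P}(|\hat{\mu}^t(p)-\bar{\mu}^t(p)| > H(t)) \le 2e^{-2H(t)^2 K(t)} = 2t^{-2}$, by the joint choice of $H(t)$ and $K(t)$ (indeed $H(t)^2K(t)=\log t$). A union bound over all candidate offloading decisions, i.e.\ subsets of $\mathcal{V}$ of size $1\le k\le B$ (there are $\sum_{k=1}^B\binom{|\mathcal{V}|}{k}$ of them) and over the at most $B$ devices in each, makes the failure probability in round $t$ at most $2B\big(\sum_{k=1}^B\binom{|\mathcal{V}|}{k}\big)t^{-2}$; multiplying by $1+\eta B$ and summing with $\sum_{t\ge 1}t^{-2}=\pi^2/6$ produces exactly $(1+\eta B)B\frac{\pi^2}{3}\sum_{k=1}^B\binom{|\mathcal{V}|}{k}$. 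Adding the three bounds gives the theorem, and since only $R_e$ carries a $\log T$ factor while all polynomial exponents equal $\tfrac{2\alpha+D}{3\alpha+D}<1$, the dominant order is $O(T^{(2\alpha+D)/(3\alpha+D)}\log T)$, which is sublinear.

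The step I expect to be the main obstacle is the concentration argument for the exploitation phases. The estimate $\hat{\mu}^t(p)$ aggregates Bernoulli observations drawn from \emph{different} edge devices, with distinct true means inside $p$, selected along a data-dependent trajectory, so neither exact independence nor a common mean holds. Making the Hoeffding bound rigorous requires conditioning on the realized sequence of counts and contexts so that the indicators are conditionally independent, and it is precisely the Hölder condition that converts the residual bias $|\bar{\mu}^t(p)-\mu(\phi^t_\nu)|$ into the controllable $LD^{\alpha/2}h_T^{-\alpha}$ term. Coupling this with a union bound that is uniform over the randomly selected decision, which is what forces the combinatorial factor $\sum_{k=1}^B\binom{|\mathcal{V}|}{k}$ rather than a realization-dependent count, is the delicate part of the proof.
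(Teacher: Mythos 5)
Your proof is correct and follows essentially the same route as the paper: the same count of exploration rounds, the same Hoeffding bound with $H(t)^2K(t)=\log t$ yielding the $t^{-2}$ tail, the same $BM$-Lipschitz sensitivity of $u$ in the success probabilities (the paper's Lemma \ref{lemma:mu12}, proved there by the same telescoping identity), and the same H\"{o}lder discretization error $LD^{\alpha/2}h_T^{-\alpha}$. The only difference is cosmetic: the paper splits the exploitation rounds by whether the selected set lies in a class $\mathcal{L}^t$ of suboptimal sets (parameterized by $At^{\theta}$ with $A=2BM$ and $\theta=-\alpha/(3\alpha+D)$), and its analysis then reduces that split to exactly your concentration-failure/concentration-success dichotomy, so the three resulting terms coincide with yours.
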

\begin{proof}
We first define the following terms. For each hypercube $p \in \mathcal{P}_T$, we define $\overline{\mu} = \sup_{\phi \in p} \mu(\phi)$ and $\underline{\mu} = \inf_{\phi \in p} \mu(\phi)$ as the best and worst success probabilities over all contexts $\phi \in p$. Also, we define the context at center of a hypercube $p$ as $\tilde{\phi}_p$ and its success probability $\tilde{\mu}(p) = \mu(\tilde{\phi}_p)$. Given a set of available edge devices $\mathcal{V}^t$, the corresponding context set $\Phi^t = \{\phi^t_{\nu}\}_{\nu \in \mathcal{V}^t}$ and the corresponding hypercube set $\mathcal{P}^t = \{p^t_{\nu}\}_{\nu \in \mathcal{V}^t}$ for each round $t$, we also define $\overline{\boldsymbol{\mu}}^t = \{\overline{\mu}(p^t_{\nu})\}_{\nu \in \mathcal{V}^t}$, $\underline{\boldsymbol{\mu}}^t = \{\underline{\mu}(p^t_{\nu})\}_{\nu \in \mathcal{V}^t}$ and $\tilde{\boldsymbol{\mu}}^t = \{\tilde{\mu}(p^t_{\nu})\}_{\nu \in \mathcal{V}^t}$. For each round $t$, we define set $\tilde{\mathcal{A}}^t$ which satisfies 
\begin{align}
    \tilde{\mathcal{A}}^t = \argmax_{\mathcal{A} \subseteq \mathcal{V}^t, |\mathcal{A}|\leq b^t} u(\tilde{\boldsymbol{\mu}}^t,\mathcal{A})
\end{align}
We then use set $\tilde{\mathcal{A}}^t$ to identify the set of edge device which are bad to select. We define 
\begin{align}
    \mathcal{L}^t = \big\{G: G \subseteq \mathcal{V}^t, |G| \leq b^t, u(\underline{\boldsymbol{\mu}}^t,\tilde{\mathcal{A}}^t)- u(\overline{\boldsymbol{\mu}}^t,G) \geq At^{\theta} \big\} \nonumber
\end{align}
to be the set of \textit{suboptimal subsets of arms} for hypercube set $\mathcal{P}^t$, where $A> 0$ and $\theta<0$ are the parameters which will be used later in the regret analysis. We call a subset $G \in \mathcal{L}^t$ \textit{suboptimal}  and $\mathcal{A}^t_{b^{-}} \backslash \mathcal{L}^t$ \textit{near-optimal} for $\mathcal{P}^t$, where $\mathcal{A}^t_{b^{-}}$ denotes the subset of $\mathcal{V}^t$ with size less than $b^t$. Then the expected regret $R(T)$ can be divided into three summands:
\begin{align}
    R(T) = \mathbb{E}[R_e(T)] + \mathbb{E}[R_s(T)] + \mathbb{E}[R_n(T)],
\end{align}
where $\mathbb{E}[R_e(T)]$ is the regret due to exploration phases and $\mathbb{E}[R_s(T)]$ and $\mathbb{E}[R_n(T)]$ both correspond to regret in exploitation phases: $\mathbb{E}[R_s(T)]$ is the regret due to suboptimal choices, i.e., the subsets of edge devices from $\mathcal{L}_t$ are selected; $\mathbb{E}[R_n(T)]$ is the regret due to near-optimal choices, i.e., the subsets of edge devices from $\mathcal{A}^t_{b-} \backslash \mathcal{L}^t$. In the following, we prove that each of the three summands is bounded.

First, the following lemma (see the proof in Appendix \ref{proof_lemma_Re}) gives a bound for $\mathbb{E}[R_e(T)]$, which depends on the choice of two parameters $z$ and $\gamma$.
\begin{lemma}(Bound for $\mathbb{E}[R_e(T)]$). \label{lemma:R_e}Let $K(t) = t^z\log{(t)}$ and $h_T = \lceil T^{\gamma} \rceil$, where $0 < z < 1$ and $0 < \gamma < \frac{1}{D}$. If the algorithm is run with these parameters, the regret $E[R_e(T)]$ is bounded by
\begin{align}
    \mathbb{E}[R_e(T)] \leq (1+\eta B) 2^D (T^{z+\gamma D} \log{(T)}+T^{\gamma D})
\end{align}
where $B = \max_{1 \leq t \leq T}b^t$.
\end{lemma}
Next, the following lemma (see the proof in Appendix \ref{proof_lemma_Rs}) gives a bound for $\mathbb{E}[R_s(T)]$, which depends on the choice of  $z$ and $\gamma$ with an additional condition of these parameters which has to be satisfied.
\begin{lemma}(Bound for $\mathbb{E}[R_s(T)]$). \label{lemma:R_s}
Let $K(t) = t^z\log{(t)}$ and $h_T = \lceil T^{\gamma} \rceil$, where $0 < z < 1$ and $0 < \gamma < \frac{1}{D}$. If the algorithm is run with these parameters, Assumption \ref{ass:holder} holds, and the additional condition $2BMt^{-\frac{z}{2}}\leq At^{\theta}$ is satisfied for all $1 \leq t \leq T$, the regret $\mathbb{E}[R_s(T)]$ is bounded by
\begin{align}
    \mathbb{E}[R_s(T)] \leq (1+\eta B)B\frac{\pi^2}{3} \sum^B_{k=1}\binom{|\mathcal{V}|}{k},
\end{align}
where $B = \max_{1 \leq t \leq T}b^t$, and $M = \max_{1 \leq t \leq T}\binom{B-1}{Y^t-1}$. 
\end{lemma}
Lastly, the following lemma (see the proof in Appendix \ref{proof_lemma_Rn}) gives a bound for $\mathbb{E}[R_n(T)]$, which depends on the choice of $z$ and $\gamma$.
\begin{lemma}(Bound for $\mathbb{E}[R_n(T)]$). \label{lemma:R_n} Let $K(t) = t^z\log{(t)}$ and $h_T = \lceil T^{\gamma} \rceil$, where $0 < z < 1$ and $0 < \gamma < \frac{1}{D}$. If the algorithm is run with these parameters and Assumption \ref{ass:holder} holds, the regret $\mathbb{E}[R_n(T)]$ is bounded by
\begin{align}
    \mathbb{E}[R_n(T)] \leq 3BMLD^{\frac{\alpha}{2}}T^{1-\gamma \alpha} + \frac{A}{1+\theta}T^{1+\theta}.
\end{align}
where $B = \max_{1 \leq t \leq T} b^t$ and $M = \max_{1\leq t \leq T}\binom{B-1}{Y^t-1}$. 
\end{lemma}
Now, let $K(t) = t^z\log{(t)}$ and $h_T = \lceil T^{\gamma} \rceil$, where $0 < z < 1$ and $0 < \gamma < \frac{1}{D}$; let $H(t) = BMt^{-\frac{z}{2}}$. Also, we assume that Assumption \ref{ass:holder} holds and the additional condition $2BMt^{-\frac{z}{2}}\leq At^{\theta}$ is satisfied for all $1 \leq t \leq T$. By Lemma \ref{lemma:R_e}, \ref{lemma:R_s}, and \ref{lemma:R_n}, the regret $R(T)$ is bounded as follows:
\begin{align}
    R(T) \leq & (1+\eta B) 2^D (T^{z+\gamma D}\log{(T)}+T^{\gamma D}) \nonumber \\& + (1+\eta B) B \frac{\pi^2}{3} \sum^B_{k=1} \binom{|\mathcal{V}|}{k}\nonumber\\ &+ 3BMLD^{\frac{\alpha}{2}} T^{1- \alpha \gamma} + \frac{A}{1+\theta} T^{1+\theta}. 
\end{align}
Now, we select the parameters $z,\gamma,A,\theta$ according to the following values $z =\frac{2\alpha}{3\alpha+D} \in (0,1)$, $\gamma = \frac{1}{3\alpha+D} \in (0,\frac{1}{D})$, $\theta = -\frac{\alpha}{3\alpha+D}$ and $A = 2BM$. It is clear that condition $2BMt^{-\frac{z}{2}}\leq At^{\theta}$ is satisfied. Then, the regret $R(T)$ can be bounded as follows:
\begin{align}
        R&(T)  \leq  (1+\eta B) 2^D (T^{\frac{2\alpha+D}{3\alpha+D}}\log{(T)}+T^{\frac{D}{3\alpha+D}}) \nonumber \\& + (1+\eta B) B \frac{\pi^2}{3} \sum^B_{k=1} \binom{|\mathcal{V}|}{k} \nonumber\\ & + (3LD^{\frac{\alpha}{2}} + \frac{6\alpha+2D}{2\alpha+D}) BMT^{\frac{2\alpha+D}{3\alpha+D}},
\end{align}
which has the dominant order $O(T^{\frac{2\alpha+D}{3\alpha+D}}\log{(T)})$.
\end{proof}

\begin{remark}  
Based on Assumption~\ref{ass:holder}, the parameters $h_T$ and $K(t)$ are designed such that the regret achieved by the policy is sublinear as stated in Theorem~\ref{thm:LCC}. In the following, we provide some intuitions behind the choices of $h_T$ and $K(T)$. We first assume that the parameters are chosen as $h_T = \lceil T^{\gamma}\rceil$ and $K(t) = t^z \log{(t)}$, in which $\gamma$ and $z$ are designed later. In the proof of Lemma~\ref{lemma:R_e}, to bound $\mathbb{E}[R_e(T)]$, our main task is designing $T^{z+\gamma D}\log{(T)} + T^{\gamma D}$ to be a sublinear term. In the proof of Lemma~\ref{lemma:R_s}, one of key steps is to bound $\text{Pr}(V^t_{G},W^t)$ by the term of $t^{-2}$ (see equation (42)) such that $\mathbb{E}[R_s(T)]$ is bounded by the term of $\sum^{\infty}_{t=1} t^{-2}$ which converges to a constant (see equation (46)). In particular, we first bound $\text{Pr}(E_1)$ and $\text{Pr}(E_2)$ by the terms of $\exp{\frac{-2t^z\log{(t)}H(t)^2}{B^2M^2}}$ (see equation (37)), and choose $H(t)$ to be $BMt^{-\frac{z}{2}}$. By Lemma~\ref{lemma:R_n}, we bound $\mathbb{E}[R_n]$ by $3BMLD^{\frac{\alpha}{2}}T^{1-\alpha\gamma} + \frac{A}{1+\theta} T^{1+\theta}$ which can be sublinear by selecting the appropriate $\gamma$ and $z$. By carefully selecting parameters $h_T = \lceil T^{\frac{1}{3\alpha +D}}\rceil $ and $K(t) = t^{\frac{2\alpha}{3\alpha +D}}$, the regret upper bound is shown to be subliear in Theorem~\ref{thm:LCC}.
\end{remark}

\section{Experiments}
In this section, we demonstrate the impact of the online coded edge computing policy by simulation studies. In particular, we carry out extensive simulations using the shifted exponential models which have been demonstrated to be a good model for Amazon EC2 clusters~\cite{reisizadeh2019coded}.

Given a dataset partitioned to $X_1,X_2,\dots,X_{5}$, we consider the linear regression problem using the gradient algorithm. It computes the gradient of quadratic loss function $\frac{1}{2}\|X_j\vec{w}_t-\vec{y}_j\|^2$ with respect to the weight vector $\vec{w}_t$ in round $t$, i.e., $f_t(X_j)=X_j^{\top}(X_j\vec{w}_t-\vec{y}_j)$ for all $1\leq j\leq5$. The computation is executed over a set of edge devices $\mathcal{V}$, where each edge device $\nu \in \mathcal{V}$ stores an encoded data chunk $\tilde{X}_{\nu}$ using Lagrange coding scheme. In such setting, we have the optimal recovery threshold $Y^t = 9$. The penalty parameter $\eta$ is $0.01$.

Motivated by the distribution model proposed in~\cite{reisizadeh2019coded} for total execution time in cloud networks, we model the success probability of each edge device $\nu \in \mathcal{V}$ as a shifted exponential function defined as follows:
\begin{align}\label{eq:mu}
    \mu({\phi}^t_{\nu}) = \mathbb{P}(c^t_{\nu}\leq d^t) = 
    \begin{cases}
    1- e^{-\lambda^t_{\nu}(d^t-a^t_{\nu})}&, \ d^t \geq a^t_{\nu}, \\ 
    0&, \ a^t_{\nu} > d^t\geq 0,
    \end{cases}
\end{align}
where the context of each edge device consists of the deadline $d^t$, the shift parameter $a^t_{\nu}>0$, and the straggling parameter $\lambda^t_{\nu} >0$ associated with edge device $\nu$. Under this model, the dimension of context space $D$ is $3$. Moreover, for function $\mu$ defined in \eqref{eq:mu}, it can be shown that the H\"{o}lder condition with $\alpha = 1$ holds. Thus, we run the online coded edge computing policy with parameters $h_T = \lceil T^{\frac{1}{6}}\rceil $ and $K(t) = t^{\frac{1}{3}}\log{(t)}$.


By the empirical analysis in~\cite{reisizadeh2019coded}, the instance of type \texttt{r4.2xlarge} is shown to have the shift parameter $a = 1.37$ and the straggling parameter $\lambda = 120$. And, the instance of type \texttt{r4.xlarge} has the shift parameter $a = 2$ and the straggling parameter $\lambda = 115$. 
 Based on the real-world parameters for Amazon EC2 clusters, the deadline $d^t\in [d_{\text{min}},d_{\text{max}}]$ (sec), the shift parameter $a^t_{\nu} \in [1.37,2]$ (sec), and the straggling parameter $\lambda^t_{\nu} \in [ 115, 120]$ (1/sec) are chosen uniformly at random in each round $t$. We consider the following four scenarios for the simulations:
\begin{itemize}[leftmargin=*]
    \item \textbf{Scenario 1:} $|\mathcal{V}| =20$, $(d_{\text{min}},d_{\text{max}}) = (1,2)$, and $b^t = 12$.
    \item  \textbf{Scenario 2:} $|\mathcal{V}| =15$, $(d_{\text{min}},d_{\text{max}}) = (1,2)$, and $b^t = 12$.
    \item  \textbf{Scenario 3:} $|\mathcal{V}| =20$,  
    $(d_{\text{min}},d_{\text{max}}) = (1,2)$, and $b^t = 15$.
    \item \textbf{Scenario 4:} $|\mathcal{V}| =20$, $(d_{\text{min}},d_{\text{max}}) = (0.5,3)$, and $b^t = 12$.
\end{itemize}
\begin{figure}[t]
  \centering
    \includegraphics[width = \columnwidth]{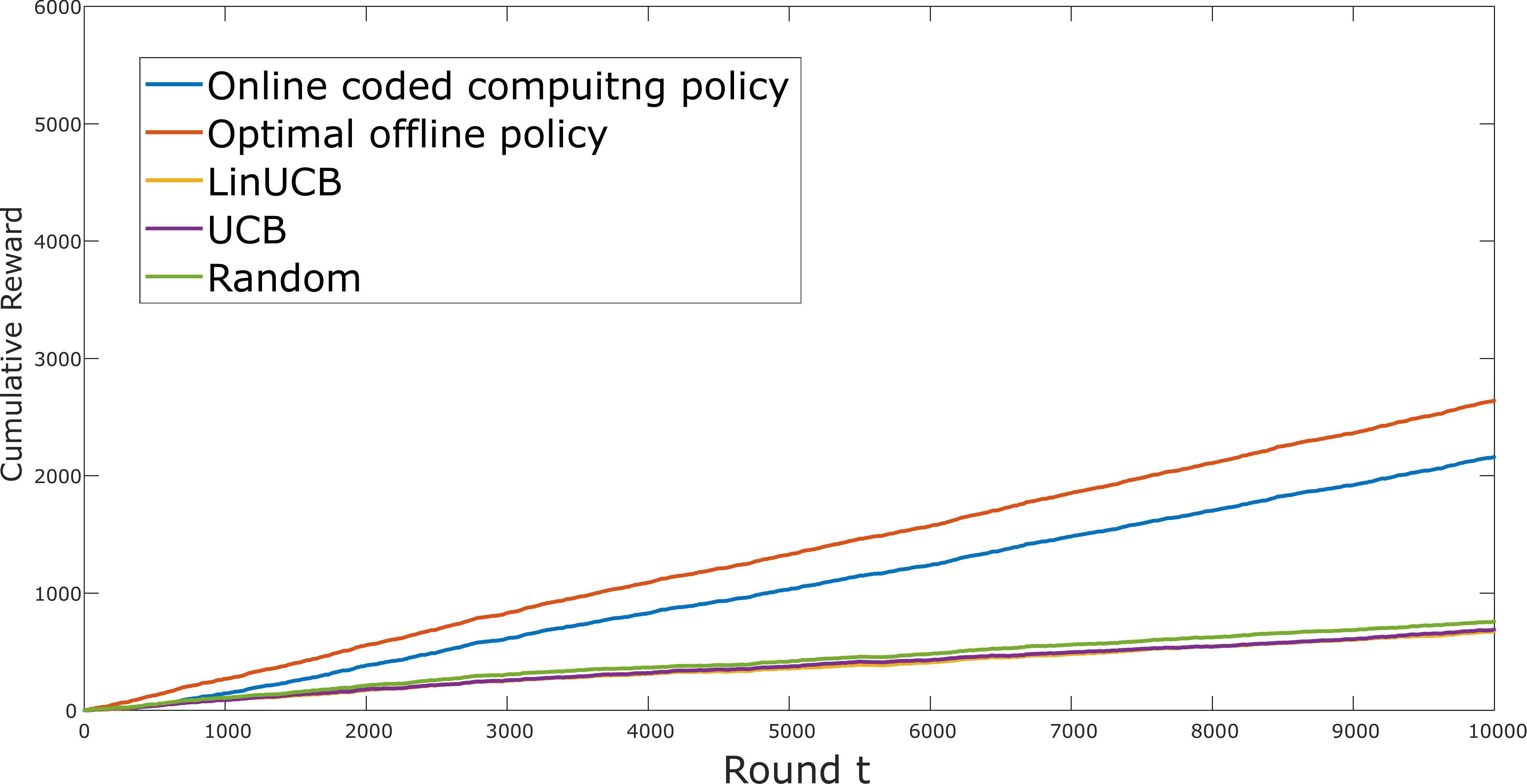}
\caption{Numerical evaluations for cumulative reward for \text{Scenario 1}.}
\label{fig:numerical1}
\end{figure}
\begin{figure}[t]
  \centering
    \includegraphics[width = \columnwidth]{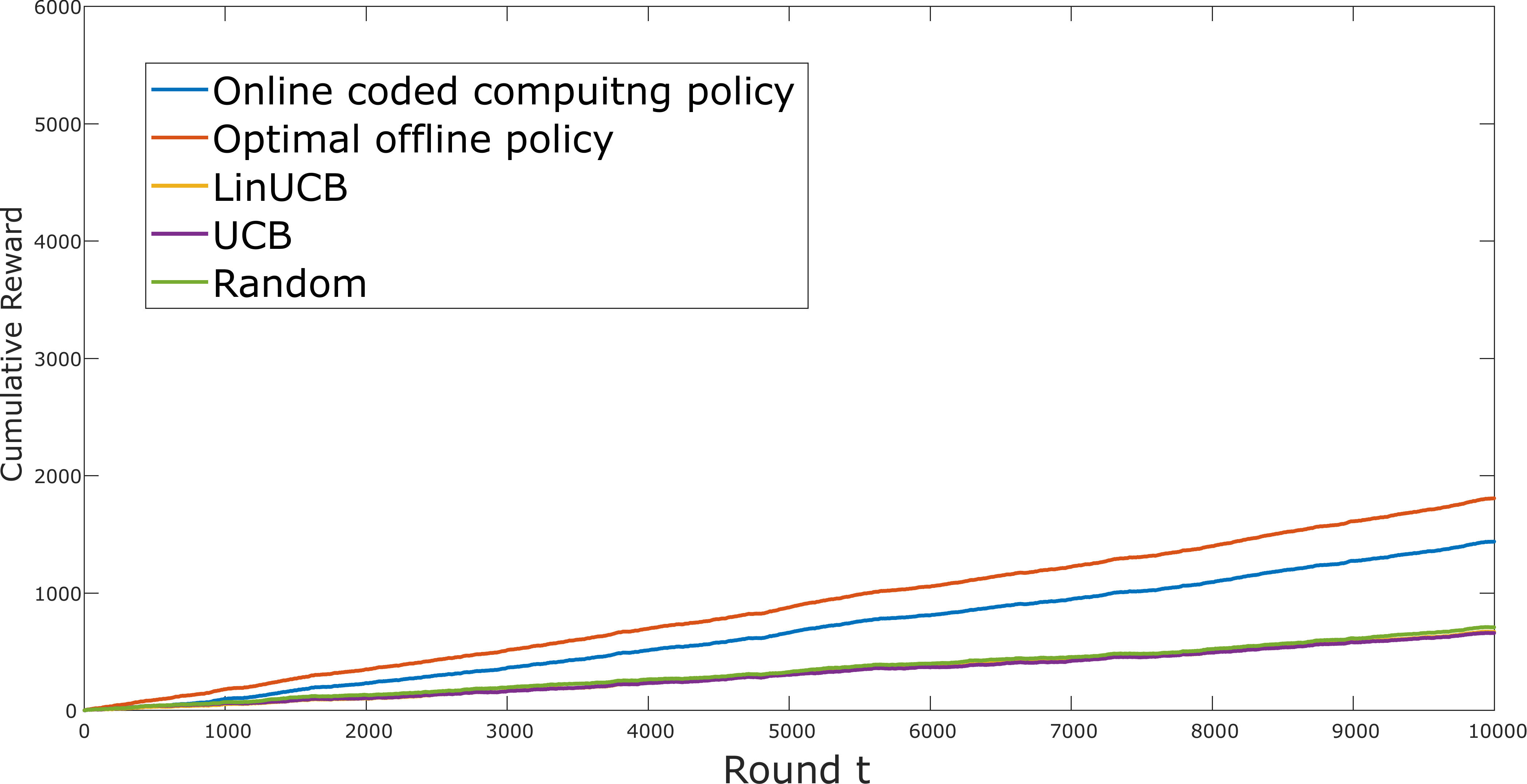}
\caption{Numerical evaluations for cumulative reward for \text{Scenario 2}.}
\label{fig:numerical2}
\end{figure}
For each scenario, the following benchmarks are considered to compare with the online coded edge computing policy:
\begin{enumerate}[leftmargin=*]
    \item \textbf{Optimal Offline policy}: Assuming knowledge of the success probability of each edge device in each round, the optimal set of edge devices is selected via Algorithm \ref{alg:optimal_oracle}.
\item \textbf{LinUCB~\cite{li2010contextual}}: LinUCB is a contextual-aware bandit algorithm which picks one arm in each round. We obtain a set of edge devices by repeating $b^t$ times of LinUCB. By sequentially removing selected edge devices, we ensure that the $b^t$ chosen edge devices are distinct.
\item \textbf{UCB~\cite{auer2002finite}}: UCB algorithm is a non-contextual and non-combinatorial algorithm. Similar to LinUCB, we repeat UCB $b^t$ times to select edge devices.
\item \textbf{Random}: A set of edge devices with size of $b^t$ is selected randomly from the available edge devices in each round $t$.
\end{enumerate}
Fig. \ref{fig:numerical1} to Fig. \ref{fig:numerical4} provide the cumulative rewards comparison of the online coded edge computing policy with the other $4$ benchmarks. We make the following conclusions from Fig. \ref{fig:numerical1} to Fig. \ref{fig:numerical4}:
\begin{itemize}[leftmargin=*]
    \item The optimal offline policy achieves the highest reward which gives an upper bound to the other policies. After a period of exploration, the proposed online policy is able to exploit the learned knowledge, and the cumulative reward approaches the upper bound.
    \item The proposed online coded edge computing policy significantly outperforms other benchmarks by taking into account the context of edge computing network.
    \item Random and UCB algorithms are not effective since they do not take the context into account for the decisions. Although LinUCB is a contextual-aware algorithm, it achieves similar cumulative regret as random and UCB algorithms. That is because the success probability model is more general here than the linear functions that LinUCB is tailored for. 
\end{itemize}
Fig. \ref{fig:regret} presents the expected regret of the proposed policy for Scenario 1. We can conclude that the proposed policy achieves a sublinear regret in the time horizon $T$ demonstrates the asymptotic optimality, i.e., $\lim_{T \rightarrow \infty}\frac{R(T)}{T} = 0$.
\begin{figure}[t]
  \centering
    \includegraphics[width = \columnwidth]{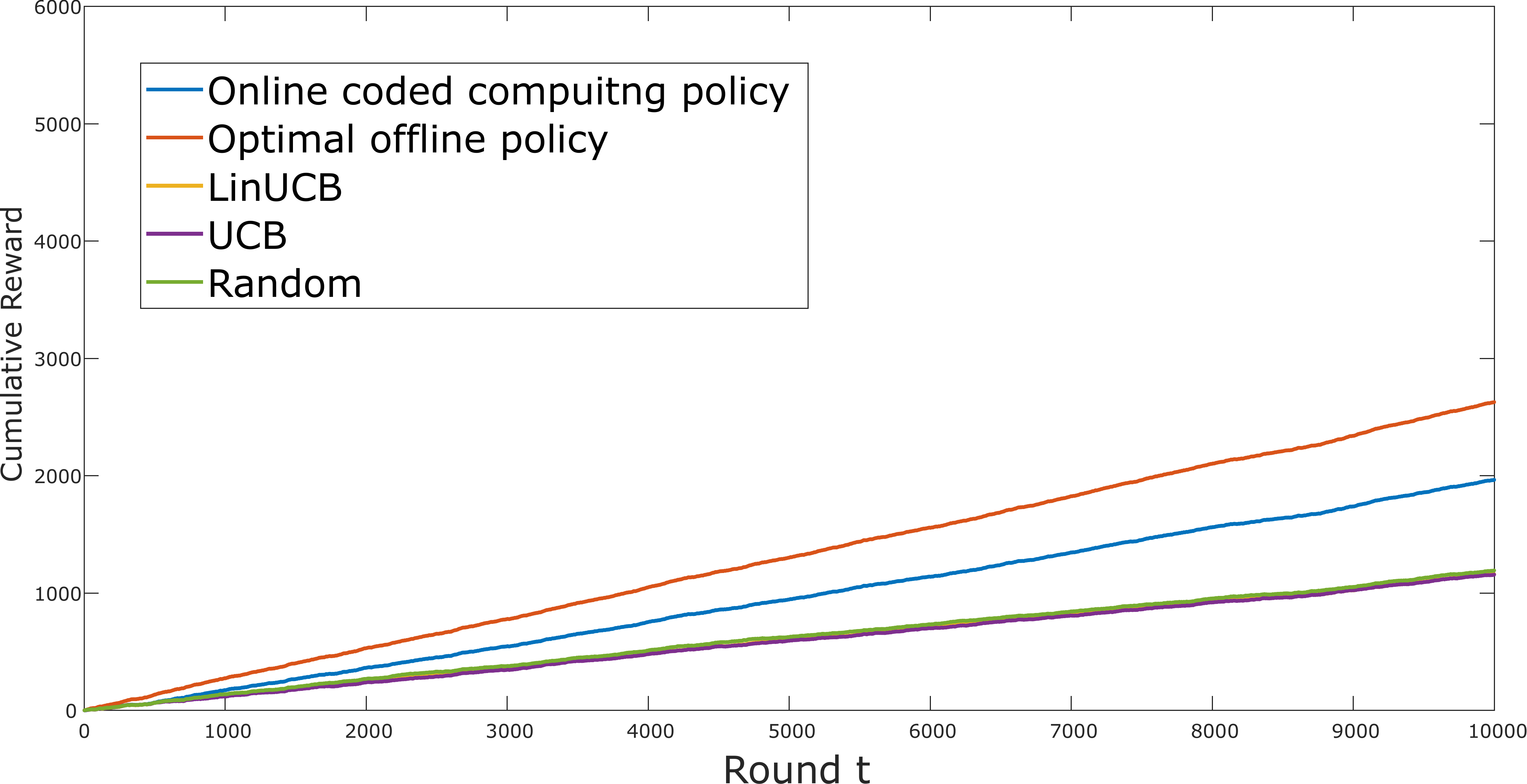}
\caption{Numerical evaluations for cumulative reward for \text{Scenario 3}.}
\label{fig:numerical3}
\end{figure}
\begin{figure}[t]
  \centering
    \includegraphics[width = \columnwidth]{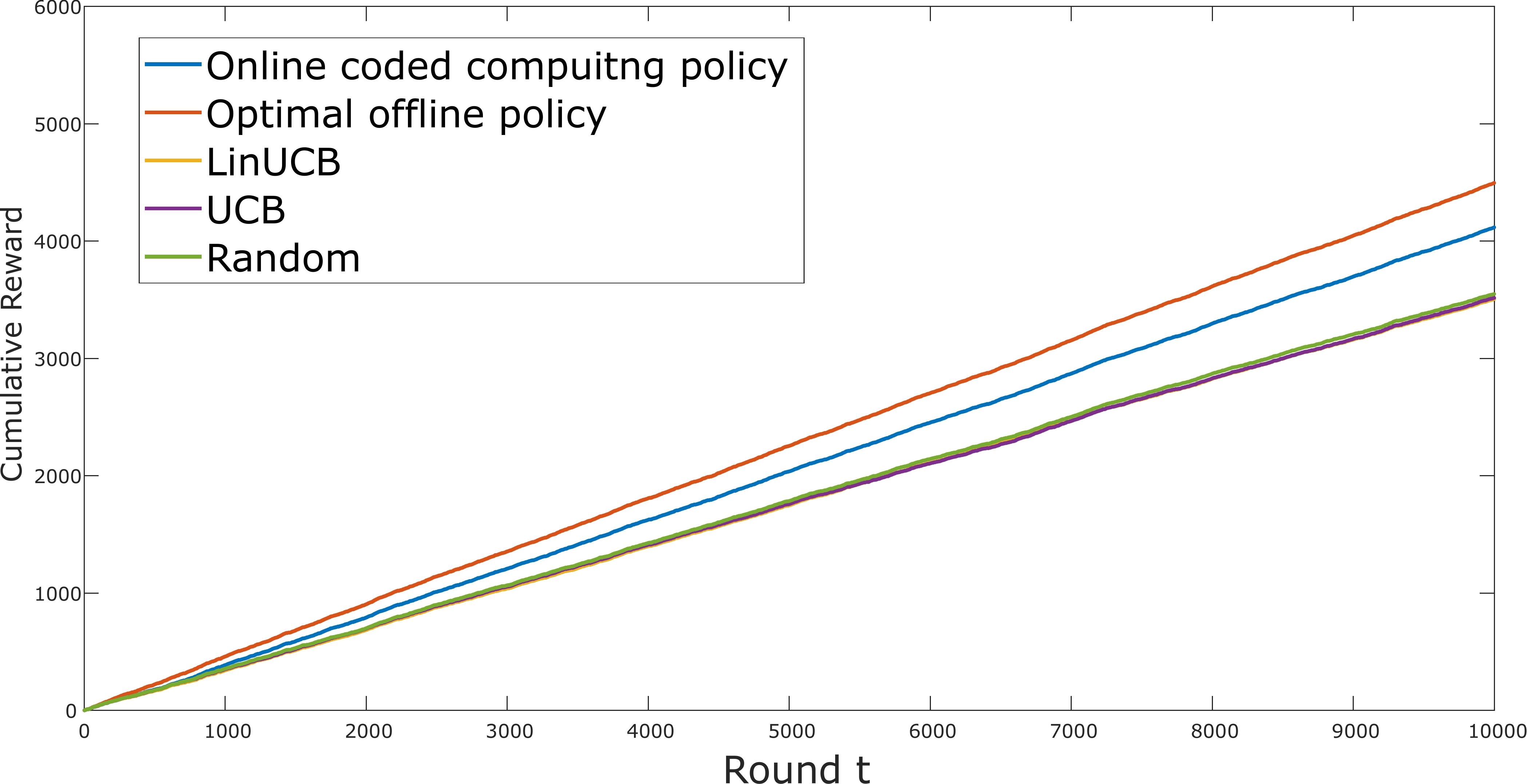}
\caption{Numerical evaluations for cumulative reward for \text{Scenario 4}.}
\label{fig:numerical4}
\end{figure}
\begin{figure}[t]
  \centering
    \includegraphics[width = \columnwidth]{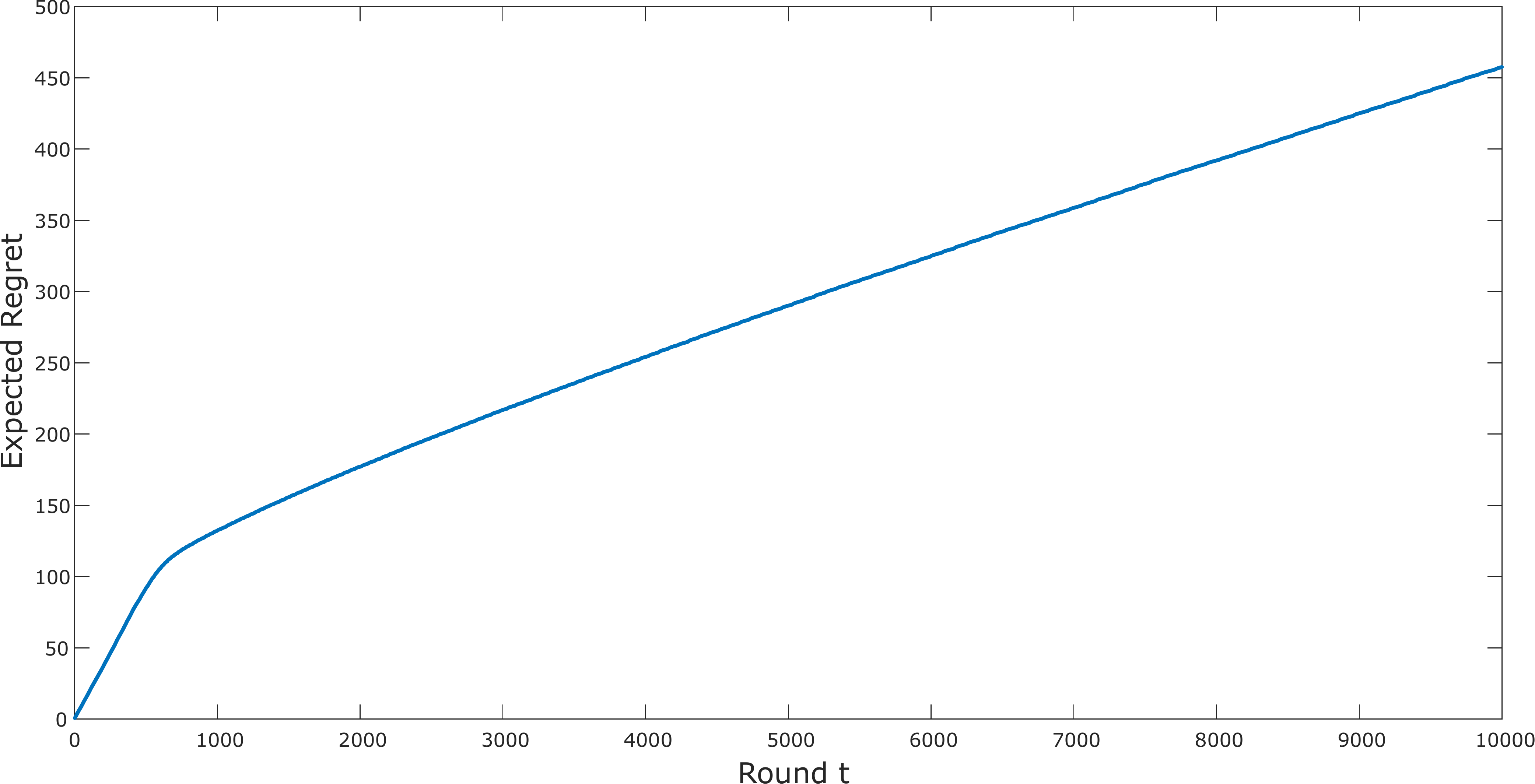}
\caption{Expected regret of the online coded edge computing policy for Scenario 1.}
\label{fig:regret}
\end{figure}

\section{Concluding Remarks and Future Directions}
Motivated by the volatility of edge devices' computing capabilities and the quality of service, and increasing demand for timely event-driven computations, we consider the problem of online computation offloading over unknown edge cloud networks without the knowledge of edge devices' capabilities. Under the coded computing framework, we formulate a combinatorial-contextual multiarmed bandit (CC-MAB) problem, which aims to maximize the cumulative expected reward. We propose the online coded edge computing policy which provably achieves asymptotically-optimal performance in terms of timely throughput, since the regret loss for the proposed CC-MAB problem compared with the optimal offline policy is sublinear. Finally, we show that the proposed online coded edge computing policy significantly improves the cumulative reward compared to the other benchmarks via numerical studies. 


\bibliographystyle{ieeetr}
\bibliography{references}

\begin{thebibliography}{10}

\bibitem{yang2020online}
C.-S. Yang, A.~S. Avestimehr, and R.~Pedarsani, ``Coded computing in unknown
  environment via online learning,'' in {\em 2020 IEEE International Symposium
  on Information Theory (ISIT)}.

\bibitem{zaharia2008improving}
M.~Zaharia, A.~Konwinski, A.~D. Joseph, R.~H. Katz, and I.~Stoica, ``Improving
  mapreduce performance in heterogeneous environments.,'' in {\em Osdi},
  vol.~8, p.~7, 2008.

\bibitem{ananthanarayanan2013effective}
G.~Ananthanarayanan, A.~Ghodsi, S.~Shenker, and I.~Stoica, ``Effective
  straggler mitigation: Attack of the clones.,'' in {\em NSDI}, vol.~13,
  pp.~185--198, 2013.

\bibitem{yu2019lagrange}
Q.~Yu, S.~Li, N.~Raviv, S.~M. Mousavi, M.~Soltanolkotabi, and A.~S. Avestimehr,
  ``Lagrange coded computing: Optimal design for resiliency, security and
  privacy,'' in {\em Artificial Intelligence and Statistics}, 2019.

\bibitem{eryilmaz2005stable}
A.~Eryilmaz, R.~Srikant, and J.~R. Perkins, ``Stable scheduling policies for
  fading wireless channels,'' {\em IEEE/ACM Transactions on Networking},
  vol.~13, no.~2, pp.~411--424, 2005.

\bibitem{tassiulas1992stability}
L.~Tassiulas and A.~Ephremides, ``Stability properties of constrained queueing
  systems and scheduling policies for maximum throughput in multihop radio
  networks,'' {\em IEEE transactions on automatic control}, vol.~37, no.~12,
  pp.~1936--1948, 1992.

\bibitem{dai2005maximum}
J.~G. Dai and W.~Lin, ``Maximum pressure policies in stochastic processing
  networks,'' {\em Operations Research}, vol.~53, no.~2, 2005.

\bibitem{neely2005dynamic}
M.~J. Neely, E.~Modiano, and C.~E. Rohrs, ``Dynamic power allocation and
  routing for time-varying wireless networks,'' {\em IEEE Journal on Selected
  Areas in Communications}, vol.~23, no.~1, pp.~89--103, 2005.

\bibitem{maguluri2012stochastic}
S.~T. Maguluri, R.~Srikant, and L.~Ying, ``Stochastic models of load balancing
  and scheduling in cloud computing clusters,'' in {\em INFOCOM, 2012
  Proceedings IEEE}, pp.~702--710, IEEE, 2012.

\bibitem{yang2019communication}
C.-S. Yang, R.~Pedarsani, and A.~S. Avestimehr, ``Communication-aware
  scheduling of serial tasks for dispersed computing,'' {\em IEEE/ACM
  Transactions on Networking (TON)}, vol.~27, no.~4, pp.~1330--1343, 2019.

\bibitem{hoseinnejhad2017deadline}
M.~Hoseinnejhad and N.~J. Navimipour, ``Deadline constrained task scheduling in
  the cloud computing using a discrete firefly algorithm,'' {\em INTERNATIONAL
  JOURNAL OF NEXT-GENERATION COMPUTING}, vol.~8, no.~3, 2017.

\bibitem{lee2018speeding}
K.~Lee, M.~Lam, R.~Pedarsani, D.~Papailiopoulos, and K.~Ramchandran, ``Speeding
  up distributed machine learning using codes,'' {\em IEEE Transactions on
  Information Theory}, vol.~64, no.~3, pp.~1514--1529, 2018.

\bibitem{li2018fundamental}
S.~Li, M.~A. Maddah-Ali, Q.~Yu, and A.~S. Avestimehr, ``A fundamental tradeoff
  between computation and communication in distributed computing,'' {\em IEEE
  Transactions on Information Theory}, vol.~64, no.~1, pp.~109--128, 2018.

\bibitem{dutta2016short}
S.~Dutta, V.~Cadambe, and P.~Grover, ``Short-dot: Computing large linear
  transforms distributedly using coded short dot products,'' in {\em Advances
  In Neural Information Processing Systems}, pp.~2100--2108, 2016.

\bibitem{lee2017high}
K.~Lee, C.~Suh, and K.~Ramchandran, ``High-dimensional coded matrix
  multiplication,'' in {\em Information Theory (ISIT), 2017 IEEE International
  Symposium on}, pp.~2418--2422, IEEE, 2017.

\bibitem{yu2017polynomial}
Q.~Yu, M.~Maddah-Ali, and S.~Avestimehr, ``Polynomial codes: an optimal design
  for high-dimensional coded matrix multiplication,'' in {\em Advances in
  Neural Information Processing Systems}, pp.~4403--4413, 2017.

\bibitem{tandon2017gradient}
R.~Tandon, Q.~Lei, A.~G. Dimakis, and N.~Karampatziakis, ``Gradient coding:
  Avoiding stragglers in distributed learning,'' in {\em International
  Conference on Machine Learning}, pp.~3368--3376, 2017.

\bibitem{li2017coding}
S.~Li, M.~A. Maddah-Ali, and A.~S. Avestimehr, ``Coding for distributed fog
  computing,'' {\em IEEE Communications Magazine}, vol.~55, no.~4, pp.~34--40,
  2017.

\bibitem{li2020coded}
S.~Li and S.~Avestimehr, ``Coded computing,'' {\em Foundations and
  Trends{\textregistered} in Communications and Information Theory}, vol.~17,
  no.~1, 2020.

\bibitem{prakash2020codedgraph}
S.~Prakash, A.~Reisizadeh, R.~Pedarsani, and A.~S. Avestimehr, ``Coded
  computing for distributed graph analytics,'' {\em IEEE Transactions on
  Information Theory}, vol.~66, no.~10, pp.~6534--6554, 2020.

\bibitem{yu2020straggler}
Q.~Yu, M.~Ali, and A.~S. Avestimehr, ``Straggler mitigation in distributed
  matrix multiplication: Fundamental limits and optimal coding,'' {\em IEEE
  Transactions on Information Theory}, 2020.

\bibitem{reisizadeh2019coded}
A.~Reisizadeh, S.~Prakash, R.~Pedarsani, and A.~S. Avestimehr, ``Coded
  computation over heterogeneous clusters,'' {\em IEEE Transactions on
  Information Theory}, 2019.

\bibitem{ferdinand2018hierarchical}
N.~Ferdinand and S.~C. Draper, ``Hierarchical coded computation,'' in {\em 2018
  IEEE International Symposium on Information Theory (ISIT)}, pp.~1620--1624,
  IEEE, 2018.

\bibitem{chen2018draco}
L.~Chen, H.~Wang, Z.~Charles, and D.~Papailiopoulos, ``Draco:
  Byzantine-resilient distributed training via redundant gradients,'' in {\em
  International Conference on Machine Learning}, pp.~903--912, 2018.

\bibitem{yang2021codedboolean}
C.-S. Yang and A.~S. Avestimehr, ``Coded computing for secure boolean
  computations,'' {\em IEEE Journal on Selected Areas in Information Theory},
  2021.

\bibitem{so2019codedprivateml}
J.~So, B.~Guler, A.~S. Avestimehr, and P.~Mohassel, ``Codedprivateml: A fast
  and privacy-preserving framework for distributed machine learning,'' {\em
  arXiv preprint arXiv:1902.00641}, 2019.

\bibitem{so2020scalable}
J.~So, B.~Guler, and A.~S. Avestimehr, ``A scalable approach for
  privacy-preserving collaborative machine learning,'' {\em arXiv preprint
  arXiv:2011.01963}, 2020.

\bibitem{karakus2017straggler}
C.~Karakus, Y.~Sun, S.~Diggavi, and W.~Yin, ``Straggler mitigation in
  distributed optimization through data encoding,'' in {\em Advances in Neural
  Information Processing Systems}, pp.~5434--5442, 2017.

\bibitem{so2020turbo}
J.~So, B.~Guler, and A.~S. Avestimehr, ``Turbo-aggregate: Breaking the
  quadratic aggregation barrier in secure federated learning,'' {\em IEEE
  Journal on Selected Areas in Information Theory}, 2021.

\bibitem{prakash2020coded}
S.~Prakash, S.~Dhakal, M.~R. Akdeniz, Y.~Yona, S.~Talwar, S.~Avestimehr, and
  N.~Himayat, ``Coded computing for low-latency federated learning over
  wireless edge networks,'' {\em IEEE Journal on Selected Areas in
  Communications}, vol.~39, no.~1, pp.~233--250, 2020.

\bibitem{prakash2020hierarchical}
S.~Prakash, A.~Reisizadeh, R.~Pedarsani, and A.~S. Avestimehr, ``Hierarchical
  coded gradient aggregation for learning at the edge,'' in {\em 2020 IEEE
  International Symposium on Information Theory (ISIT)}, pp.~2616--2621, IEEE,
  2020.

\bibitem{yu2020codedtree}
M.~Yu, S.~Sahraei, S.~Li, S.~Avestimehr, S.~Kannan, and P.~Viswanath, ``Coded
  merkle tree: Solving data availability attacks in blockchains,'' in {\em
  International Conference on Financial Cryptography and Data Security},
  pp.~114--134, Springer, 2020.

\bibitem{li2020polyshardTIFS}
S.~Li, M.~Yu, C.-S. Yang, A.~S. Avestimehr, S.~Kannan, and P.~Viswanath,
  ``Polyshard: Coded sharding achieves linearly scaling efficiency and security
  simultaneously,'' {\em IEEE Transactions on Information Forensics and
  Security}, vol.~16, pp.~249--261, 2020.

\bibitem{yang2019timely}
C.-S. Yang, R.~Pedarsani, and A.~S. Avestimehr, ``Timely-throughput optimal
  coded computing over cloud networks,'' in {\em Proceedings of the Twentieth
  ACM International Symposium on Mobile Ad Hoc Networking and Computing},
  pp.~301--310, ACM, 2019.

\bibitem{yang2019timelyisit}
C.-S. Yang, R.~Pedarsani, and A.~S. Avestimehr, ``Timely coded computing,'' in
  {\em 2019 IEEE International Symposium on Information Theory (ISIT)},
  pp.~2798--2802, IEEE, 2019.

\bibitem{lai1985asymptotically}
T.~L. Lai and H.~Robbins, ``Asymptotically efficient adaptive allocation
  rules,'' {\em Advances in applied mathematics}, vol.~6, no.~1, pp.~4--22,
  1985.

\bibitem{auer2002finite}
P.~Auer, N.~Cesa-Bianchi, and P.~Fischer, ``Finite-time analysis of the
  multiarmed bandit problem,'' {\em Machine learning}, vol.~47, no.~2-3,
  pp.~235--256, 2002.

\bibitem{li2010contextual}
L.~Li, W.~Chu, J.~Langford, and R.~E. Schapire, ``A contextual-bandit approach
  to personalized news article recommendation,'' in {\em Proceedings of the
  19th international conference on World wide web}, pp.~661--670, ACM, 2010.

\bibitem{sen2017contextual}
R.~Sen, K.~Shanmugam, M.~Kocaoglu, A.~Dimakis, and S.~Shakkottai, ``Contextual
  bandits with latent confounders: An nmf approach,'' in {\em Artificial
  Intelligence and Statistics}, pp.~518--527, 2017.

\bibitem{shariff2018differentially}
R.~Shariff and O.~Sheffet, ``Differentially private contextual linear
  bandits,'' in {\em Advances in Neural Information Processing Systems},
  pp.~4296--4306, 2018.

\bibitem{gai2012combinatorial}
Y.~Gai, B.~Krishnamachari, and R.~Jain, ``Combinatorial network optimization
  with unknown variables: Multi-armed bandits with linear rewards and
  individual observations,'' {\em IEEE/ACM Transactions on Networking (TON)},
  vol.~20, no.~5, pp.~1466--1478, 2012.

\bibitem{li2019combinatorial}
F.~Li, J.~Liu, and B.~Ji, ``Combinatorial sleeping bandits with fairness
  constraints,'' in {\em IEEE INFOCOM 2019-IEEE Conference on Computer
  Communications}, pp.~1702--1710, IEEE, 2019.

\bibitem{chen2019task}
L.~Chen and J.~Xu, ``Task replication for vehicular cloud: Contextual
  combinatorial bandit with delayed feedback,'' in {\em IEEE INFOCOM 2019-IEEE
  Conference on Computer Communications}, pp.~748--756, IEEE, 2019.

\bibitem{li2016contextual}
S.~Li, B.~Wang, S.~Zhang, and W.~Chen, ``Contextual combinatorial cascading
  bandits,'' in {\em Proceedings of the 33rd International Conference on
  International Conference on Machine Learning-Volume 48}, pp.~1245--1253,
  2016.

\bibitem{muller2016context}
S.~M{\"u}ller, O.~Atan, M.~van~der Schaar, and A.~Klein, ``Context-aware
  proactive content caching with service differentiation in wireless
  networks,'' {\em IEEE Transactions on Wireless Communications}, vol.~16,
  no.~2, pp.~1024--1036, 2016.

\bibitem{qin2014contextual}
L.~Qin, S.~Chen, and X.~Zhu, ``Contextual combinatorial bandit and its
  application on diversified online recommendation,'' in {\em Proceedings of
  the 2014 SIAM International Conference on Data Mining}, pp.~461--469, SIAM,
  2014.

\bibitem{hoeffding1994probability}
W.~Hoeffding, ``Probability inequalities for sums of bounded random
  variables,'' in {\em The Collected Works of Wassily Hoeffding}, pp.~409--426,
  Springer, 1994.

\end{thebibliography}
\vspace{-3mm}
\begin{IEEEbiographynophoto}{Chien-Sheng Yang}
received his the B.S. degree in electrical and computer engineering from
National Chiao Tung University (NCTU), Hsinchu, Taiwan in 2015 and is currently pursuing his Ph.D. in Electrical and Computer Engineering from the University of Southern California (USC), Los Angeles. He received the Annenberg Graduate Fellowship in 2016. He was a finalist of the ACM International Symposium on Mobile Ad Hoc Networking and Computing (MobiHoc) Best Paper Award in 2019. His interests include information theory, machine learning and edge computing. 
\end{IEEEbiographynophoto}
\vspace{-3mm}
\begin{IEEEbiographynophoto}{Ramtin Pedarsani}
received the B.Sc. degree in electrical engineering from the University of Tehran, Tehran, Iran, in 2009, the M.Sc. degree in communication systems from the Swiss Federal Institute of Technology (EPFL), Lausanne, Switzerland, in 2011, and the Ph.D. degree in electrical engineering and computer sciences from the University of California at Berkeley, Berkeley, CA, USA, in 2015.,He is currently an Assistant Professor with the Department of Electrical and Computer Engineering, University of California, Santa Barbara, CA, USA. His research interests include machine learning, optimization, information theory, game theory, and transportation systems., Dr. Pedarsani is the recipient of the Communications Society and Information Theory Society Joint Paper Award in 2020, the Best Paper Award at the IEEE International Conference on Communications in 2014, and the NSF CRII Award in 2017.
\end{IEEEbiographynophoto}
\begin{IEEEbiographynophoto}{A. Salman Avestimehr}
is a Professor, the inaugural director of the USC-Amazon Center on Secure and Trusted Machine Learning (Trusted AI), and the director of the Information Theory and Machine Learning (vITAL) research lab at the Electrical and Computer Engineering Department of University of Southern California. He is also an Amazon Scholar at Alexa AI. He received his Ph.D. in 2008 and M.S. degree in 2005 in Electrical Engineering and Computer Science, both from the University of California, Berkeley. Prior to that, he obtained his B.S. in Electrical Engineering from Sharif University of Technology in 2003. His research interests include information theory and coding theory, and large-scale distributed computing and machine learning, secure and private computing, and blockchain systems

Dr. Avestimehr has received a number of awards for his research, including the James L. Massey Research $\&$ Teaching Award from IEEE Information Theory Society, an Information Theory Society and Communication Society Joint Paper Award, a Presidential Early Career Award for Scientists and Engineers (PECASE) from the White House, a Young Investigator Program (YIP) award from the U.S. Air Force Office of Scientific Research, a National Science Foundation CAREER award, the David J. Sakrison Memorial Prize, and several Best Paper Awards at Conferences. He has been an Associate Editor for IEEE Transactions on Information Theory. He is currently a general Co-Chair of the 2020 International Symposium on Information Theory (ISIT).
\end{IEEEbiographynophoto}
\appendices
\section{Proof of Lemma \ref{lemma:R_e}} \label{proof_lemma_Re}
Suppose the policy enters the exploration phase in round $t$ and let $\mathcal{P}^t = \{p^t_{\nu}\}_{\nu \in \mathcal{V}^t}$ be the corresponding hypercubes of edge devices. Then, based on the design of the proposed policy, the set of under-explored hypercubes $\mathcal{P}^{\text{ue},t}_T$ is non-empty, i.e., there exists at least one edge device with context $\phi^t_{\nu}$ such that a hypercube $p$ satisfying $\phi^t_{\nu} \in p$ has $C^t(p) \leq K(t)= t^z \log{(t)}$. Clearly, there can be at most $\lceil T^z \log{(T) \rceil}$ exploration phases in which edge devices with contexts in $p$ are selected due to under-exploration of $p$. Since there are $(h_T)^D$ hypercubes in the partition, there can be at most $(h_T)^DT^z\log{(T)}$ exploration phases. Also, the maximum achievable
reward of an offloading decision is bounded by $1 - \eta$ and the minimum achievable reward
is $-B\eta$. The maximum regret in one exploration phase is bounded by $1+\eta(B - 1) < 1 + \eta B$. Therefore, we have
\begin{align}
    \mathbb{E}[R_e(T)] & \leq (1+\eta B) (h_T)^D \lceil T^z \log{(T)}\rceil \\
    & = (1+\eta B) \lceil T^{\gamma} \rceil^D \lceil T^z \log{(T)}\rceil \\
    & \leq (1+\eta B) 2^DT^{\gamma D} (T^z \log{(T)}+1) \\
    & = (1+\eta B) 2^D (T^{z+\gamma D}\log{(T)}+T^{\gamma D}) 
\end{align}
using the fact that $\lceil T^{\gamma} \rceil^D \leq (2T^{\gamma})^D = 2^D T^{\gamma D}$.

\section{Proof of Lemma \ref{lemma:R_s}} \label{proof_lemma_Rs}
For each $t \in [T]$, we define $W^t = \{\mathcal{P}^{\text{ue},t} = \emptyset \}$ as the event that the algorithm enters the exploitation phase. By the definition of $\mathcal{P}^{\text{ue},t}$, we have that $C^t(p)>K(t) = t^z\log{(t)}$ for all $p \in \mathcal{P}^t$. Let $V^t_{G}$ be the event that subset $G \in \mathcal{L}^t$ is selected in round $t$. Then, we have
\begin{align}
    R_s(T) = \sum^T_{t=1} \sum_{G \in \mathcal{L}^t} \mathbbm{1}_{\{V^t_G,W^t\}} \times (r(\mathcal{A}^{t*}) - r(G)).
\end{align}
Since the maximum regret is bounded by $1+\eta B$, we have
\begin{align}
     R_s(T) \leq (1+\eta B)\sum^T_{t=1} \sum_{G \in \mathcal{L}^t} \mathbbm{1}_{\{V^t_G,W^t\}}.
\end{align}
By taking the expectation, the regret can be bounded as follows
\begin{align}
    \mathbb{E}[R_s(T)] \leq  (1+\eta B)\sum^T_{t=1} \sum_{G \in \mathcal{L}^t} \textrm{Pr}(V^t_G,W^t). 
\end{align}
Now, we explain how to bound $\textrm{Pr}(V^t_G,W^t)$. Because of the design of policy, the choice of $G$ is optimal based on the estimated $\hat{\boldsymbol{\mu}}^t$. Thus, we have $u(\hat{\boldsymbol{\mu}}^t,G) \geq u(\hat{\boldsymbol{\mu}}^t,\tilde{\mathcal{A}}^t)$ which implies
\begin{align}
    \textrm{Pr}(V^t_G,W^t) \leq \textrm{Pr}(u(\hat{\boldsymbol{\mu}}^t,G) \geq u(\hat{\boldsymbol{\mu}}^t,\tilde{\mathcal{A}}^t),W^t).
\end{align}
The event $\{u(\hat{\boldsymbol{\mu}}^t,G) \geq u(\hat{\boldsymbol{\mu}}^t,\tilde{\mathcal{A}}^t),W^t\}$ actually implies that at least one of the following events holds for any $H(t) > 0$:
\begin{align}
    E_1 = \big \{u(\hat{\boldsymbol{\mu}}^t,G) \geq u(\overline{\boldsymbol{\mu}}^t,G)+H(t),W^t \big\} \nonumber\\
    E_2 = \big\{u(\hat{\boldsymbol{\mu}}^t,\tilde{\mathcal{A}}^t) \leq u(\underline{\boldsymbol{\mu}}^t,\tilde{\mathcal{A}}^t)-H(t),W^t \big\} \nonumber\\
    E_3 = \big\{ 
    u(\hat{\boldsymbol{\mu}}^t,G) \geq u(\hat{\boldsymbol{\mu}}^t,\tilde{\mathcal{A}}^t),
    u(\hat{\boldsymbol{\mu}}^t,G) < u(\overline{\boldsymbol{\mu}}^t,G)+H(t), \nonumber \\  u(\hat{\boldsymbol{\mu}}^t,\tilde{\mathcal{A}}^t) > u(\underline{\boldsymbol{\mu}}^t,\tilde{\mathcal{A}}^t)-H(t),W^t\big\}. \nonumber
\end{align}
Therefore, we have $\big \{u(\hat{\boldsymbol{\mu}}^t,G) \geq u(\hat{\boldsymbol{\mu}}^t,\tilde{\mathcal{A}}^t),W^t \big \} \subseteq E_1 \cup E_2 \cup E_3$. 

Then, we proceed to bound the probabilities of events $E_1$, $E_2$ and $E_3$ separately. Before bounding $\textrm{Pr}(E_1)$, we first present the following lemma which is proved in Appendix \ref{proof_lemma_mu12}.
\begin{lemma} \label{lemma:mu12}
Given a positive number $H(t)$, $\boldsymbol{\mu}_1$, $\boldsymbol{\mu}_2$ and $G$, if $u(\boldsymbol{\mu}_1,G) \geq u(\boldsymbol{\mu}_2,G)+H(t)$, then there exits $\nu \in G$ such that
\begin{align}
    \mu_1(p^t_{\nu}) \geq \mu_2(p^t_{\nu})+\frac{H(t)}{BM},
\end{align}
 where $B = \max_{1 \leq t \leq T}b^t$ and $M = \max_{1\leq t \leq T}\binom{B-1}{Y^t-1}$.
\end{lemma}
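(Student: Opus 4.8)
The plan is to reduce the statement to a sensitivity bound on the success-probability part of the reward. First I would note that the cost term $-\eta|G|$ in $u$ depends only on $G$, not on the success probabilities, so it cancels in the difference. Writing $P(\boldsymbol{\mu},G)$ for the first summand of $u(\boldsymbol{\mu},G)$ in \eqref{eq:reward}, the hypothesis $u(\boldsymbol{\mu}_1,G)\ge u(\boldsymbol{\mu}_2,G)+H(t)$ is equivalent to $P(\boldsymbol{\mu}_1,G)-P(\boldsymbol{\mu}_2,G)\ge H(t)$. The quantity $P(\boldsymbol{\mu},G)$ is exactly the probability that at least $Y^t$ of the $|G|$ independent Bernoulli trials with parameters $\{\mu(p^t_\nu)\}_{\nu\in G}$ succeed. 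The structural fact I would exploit is that this probability is \emph{multilinear}, i.e. affine in each coordinate $\mu(p^t_\nu)$ separately.

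Second, I would run a hybrid (telescoping) argument, switching the coordinates one at a time from $\boldsymbol{\mu}_2$ to $\boldsymbol{\mu}_1$ in some fixed order. By multilinearity, replacing a single coordinate $\mu_2(p^t_\nu)$ by $\mu_1(p^t_\nu)$ changes $P$ by exactly $(\mu_1(p^t_\nu)-\mu_2(p^t_\nu))\,D_\nu$, where $D_\nu$ is the slope of $P$ in that coordinate at the current hybrid point. A direct computation identifies $D_\nu$ as the probability that exactly $Y^t-1$ of the remaining $|G|-1$ trials succeed, i.e. a sum of $\binom{|G|-1}{Y^t-1}$ nonnegative product terms. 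Summing over the single-coordinate steps gives
\begin{align}
    P(\boldsymbol{\mu}_1,G)-P(\boldsymbol{\mu}_2,G) = \sum_{\nu \in G}\big(\mu_1(p^t_\nu)-\mu_2(p^t_\nu)\big)\,D_\nu. \nonumber
\end{align}

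Third, I would bound each slope by $0\le D_\nu\le \binom{|G|-1}{Y^t-1}\le\binom{B-1}{Y^t-1}\le M$, using $|G|\le b^t\le B$ and monotonicity of binomial coefficients, and note that the sum has at most $|G|\le B$ terms. Since every $D_\nu\ge 0$, I can upper bound each factor $\mu_1(p^t_\nu)-\mu_2(p^t_\nu)$ by its maximum over $\nu\in G$, obtaining $H(t)\le \max_{\nu\in G}\big(\mu_1(p^t_\nu)-\mu_2(p^t_\nu)\big)\cdot BM$. Taking $\nu$ to be the maximizer then yields $\mu_1(p^t_\nu)\ge\mu_2(p^t_\nu)+H(t)/(BM)$, which is the claim.

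I expect the main obstacle to be the bookkeeping in the second step: correctly justifying that the per-coordinate increment of $P$ is precisely the coordinate change times a nonnegative slope bounded by $M$, and that the telescoping identity holds for any fixed switching order (the value of each $D_\nu$ depends on the hybrid point, but the bounds $0\le D_\nu\le M$ hold uniformly, so the conclusion is order-independent). The sign handling in the third step — some coordinate differences may be negative, which is harmless because the slopes are nonnegative — also needs care but is routine once multilinearity is in hand.
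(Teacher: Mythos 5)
Your proposal is correct and follows essentially the same route as the paper's proof: both telescope the difference of the multilinear success probability coordinate by coordinate and bound the resulting sum of at most $|G|\binom{|G|-1}{Y^t-1}\le BM$ nonnegative product terms, each a product of probabilities times a coordinate difference. The only differences are cosmetic — the paper argues by contradiction and establishes the expansion via an explicit induction on $Y^t$, whereas you argue directly and identify each slope $D_\nu$ as the probability of exactly $Y^t-1$ successes among the remaining coordinates at the hybrid point, which makes the nonnegativity and the count $\binom{|G|-1}{Y^t-1}$ immediate.
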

Thus, by Lemma \ref{lemma:mu12}, we have $E_1 = \big \{u(\hat{\boldsymbol{\mu}}^t,G) \geq u(\overline{\boldsymbol{\mu}}^t,G)+H(t),W^t \big\} \subseteq \big\{\hat{\mu}^t(p^t_{\nu}) \geq \overline{\mu}(p^t_{\nu})+\frac{H(t)}{BM}, \exists \nu \in G,W^t \big\}$.

By the definition of $\overline{\mu}(p)$, the expectation of estimated success probability for the edge device $\nu \in \mathcal{V}^t$ can be bounded by $\mathbb{E}[\hat{\mu}^t(p^t_{\nu})] \leq \overline{\mu}(p^t_{\nu})$. Then, we bound $\textrm{Pr}(E_1)$ as follows
\begin{align}
    \textrm{Pr}(E_1) = \textrm{Pr}(u(\hat{\boldsymbol{\mu}}^t,G) \geq u(\overline{\boldsymbol{\mu}}^t,G)+H(t),W^t)\\
    \leq \textrm{Pr}(\hat{\mu}(p^t_{\nu}) \geq \overline{\mu}(p^t_{\nu})+\frac{H(t)}{BM}, \exists \nu \in G,W^t)\\
    \leq  \textrm{Pr}(\hat{\mu}^t(p^t_{\nu}) \geq \mathbb{E}[\hat{\mu}^t(p^t_{\nu})]+\frac{H(t)}{BM}, \exists \nu \in G,W^t)\\
    \leq \sum_{\nu \in G} \textrm{Pr}(\hat{\mu}^t(p^t_{\nu}) \geq \mathbb{E}[\hat{\mu}^t(p^t_{\nu})]+\frac{H(t)}{BM},W^t).
\end{align}
By applying Chernoff-Hoeffding inequality \cite{hoeffding1994probability} and the fact that there are at least $K(t) = t^z\log{(t)}$ samples drawn, we have
\begin{align}
    \textrm{Pr}(E_1) & \leq \sum_{\nu \in G} \textrm{Pr}(\hat{\mu}^t(p^t_{\nu}) \geq \mathbb{E}[\hat{\mu}^t(p^t_{\nu})]+\frac{H(t)}{BM},W^t)\\
    & \leq \sum_{\nu \in G} \exp{(\frac{-2C^t(p^t_{\nu})H(t)^2}{B^2M^2})}\\
    & \leq \sum_{\nu \in G} \exp{(\frac{-2t^z\log{(t)}H(t)^2}{B^2M^2})}. 
\end{align}
If we choose $H(t) = BMt^{-z/2} > 0$, we have
\begin{align}
    \textrm{Pr}(E_1) & \leq B \exp{(\frac{-2t^z\log{(t)H(t)^2}}{B^2M^2})}\\
    & = B \exp{(-2\log{(t)})} = Bt^{-2}. \label{eq:bound_E1}
\end{align}
Similarly, we have a bound for $\textrm{Pr}(E_2)$:
\begin{align}
      \textrm{Pr}(E_2) & \leq Bt^{-2}. \label{eq:bound_E2}
\end{align}
Lastly, we bound $\textrm{Pr}(E_3)$. Now we suppose that the following condition is satisfied:
\begin{align} \label{eq:condition}
    2H(t) \leq At^{\theta}.
\end{align}
Since $G \in \mathcal{L}^t$, we have $u(\underline{\boldsymbol{\mu}}^t,\tilde{\mathcal{A}}^t)- u(\overline{\boldsymbol{\mu}}^t,G) \geq At^{\theta}$. With (\ref{eq:condition}), we have $u(\underline{\boldsymbol{\mu}}^t,\tilde{\mathcal{A}}^t)- H(t) \geq u(\overline{\boldsymbol{\mu}}^t,G) + H(t)$ which contradicts event $E_3$. That is, under condition (\ref{eq:condition}), we have $\textrm{Pr}(E_3) = 0$. 

Under condition (\ref{eq:condition}), using (\ref{eq:bound_E1}) and (\ref{eq:bound_E2}), we have 
\begin{align}
    \textrm{Pr}(V^t_G,W^t) & \leq \textrm{Pr}(E_1 \cup E_2 \cup E_3)\\ 
    & \leq \textrm{Pr}(E_1) +\textrm{Pr}(E_2) +\textrm{Pr}(E_3) \leq 2Bt^{-2}.
\end{align}
Finally, we complete the regret bound for $\mathbb{E}[R_s(T)]$ as follows:
\begin{align}
    \mathbb{E}[&R_s(T)] \leq  (1+\eta B)\sum^T_{t=1} \sum_{G \in \mathcal{L}^t} \textrm{Pr}(V^t_G,W^t) \\
    & \leq (1+\eta B) |\mathcal{L}^t| \sum^T_{t=1} 2Bt^{-2} \leq (1+\eta B) |\mathcal{L}^t|(2B)\sum^{\infty}_{t=1} t^{-2}\\
    & = (1+\eta B) |\mathcal{L}^t| (2B) \frac{\pi^2}{6} \leq (1+\eta B) B \frac{\pi^2}{3} \sum^B_{k=1} \binom{|\mathcal{V}|}{k}.
\end{align}

\section{Proof of Lemma \ref{lemma:R_n}} \label{proof_lemma_Rn}

For each $t \in [T]$, we define $W^t = \{\mathcal{P}^{\text{ue},t} = \emptyset \}$ as the event that the policy enters the exploitation phase. Then, the regret due to near-optimal subsets can be written as
\begin{align}
    R_n(T) = \sum^T_{t=1} \mathbbm{1}_{\{W^t, G^t \in \mathcal{A}^t_{b^{-}}\backslash\mathcal{L}^t \}}(r(\mathcal{A}^{t*})-r(G^t)).
\end{align}
Let $Q^t = \{W^t,G^t \in \mathcal{A}^t_{b^{-}}\backslash\mathcal{L}^t\}$ be the event that a near-optimal subset is selected in round $t$. Then, we have
\begin{align}
    \mathbb{E}[R_n(T)] & = \sum^T_{t=1}\textrm{Pr}(Q^t)\mathbb{E}[r(\mathcal{A}^{t*})-r(G^t)|Q^t]\\
    & \leq \sum^T_{t=1} (u(\boldsymbol{\mu}^t\mathcal{A}^{t*})-u(\boldsymbol{\mu}^t,G^t)).
\end{align}
where $G^t$ is near-optimal in each round $t$. 
By the definition of $\mathcal{L}^t$, we then have
\begin{align}
    u(\underline{\boldsymbol{\mu}}^t,\tilde{\mathcal{A}}^t) - u(\overline{\boldsymbol{\mu}}^t,G^t) < At^{\theta}. \label{eq:def_L}
\end{align}
By the function $c$ defined in Appendix \ref{proof_lemma_mu12} and Assumption \ref{ass:holder}, we have
\begin{align}
    &u(\boldsymbol{\mu}^t,\mathcal{A}^{t*}) - u(\tilde{\boldsymbol{\mu}}^t,\mathcal{A}^{t*})=  c(\boldsymbol{\mu}^t,\tilde{\boldsymbol{\mu}}^t,\mathcal{A}^{t*},Y^t)\\
     \leq &\sum_{(G_1,G_2,\nu) \in \mathcal{S}(\mathcal{A}^{t*},Y^t)}|\mu(\phi^t_{\nu}) - \mu(\tilde{\phi}_{p^t_{\nu}})| \\
     \leq & \sum_{(G_1,G_2,\nu) \in \mathcal{S}(\mathcal{A}^{t*},Y^t)}
    L \| \phi^t_{\nu} - \tilde{\phi}_{p^t_{\nu}}\|^{\alpha} \\
     \leq & \sum_{(G_1,G_2,\nu) \in \mathcal{S}(\mathcal{A}^{t*},Y^t)} LD^{\frac{\alpha}{2}}h^{-\alpha}_T \\
     = & \binom{|\mathcal{A}^{t*}|}{Y^t}Y^t LD^{\frac{\alpha}{2}}h^{-\alpha}_T =  \binom{|\mathcal{A}^{t*}|-1}{Y^t-1}|\mathcal{A}^{t*}| LD^{\frac{\alpha}{2}}h^{-\alpha}_T\\
     \leq & BMLD^{\frac{\alpha}{2}}h^{-\alpha}_T.
\end{align}
Similarly, we have the following inequalities:
\begin{align}
    u(\tilde{\boldsymbol{\mu}}^t,\tilde{\mathcal{A}}^{t}) - u(\underline{\boldsymbol{\mu}}^t,\tilde{\mathcal{A}}^{t}) \leq BMLD^{\frac{\alpha}{2}}h^{-\alpha}_T \\
    u(\overline{\boldsymbol{\mu}}^t,G^{t}) - u(\boldsymbol{\mu}^t,G^t) \leq BMLD^{\frac{\alpha}{2}}h^{-\alpha}_T 
\end{align}
Now, we bound $u(\boldsymbol{\mu}^t\mathcal{A}^{t*})-u(\boldsymbol{\mu}^t,G^t)$ as follows:
\begin{align}
     & u(\boldsymbol{\mu}^t,\mathcal{A}^{t*})-u(\boldsymbol{\mu}^t,G^t)\\
     \leq &  u(\tilde{\boldsymbol{\mu}}^t,\mathcal{A}^{t*})+BMLD^{\frac{\alpha}{2}}h^{-\alpha}_T-u(\boldsymbol{\mu}^t,G^t)\\
          \leq &  u(\tilde{\boldsymbol{\mu}}^t,\tilde{\mathcal{A}}^{t})+BMLD^{\frac{\alpha}{2}}h^{-\alpha}_T-u(\boldsymbol{\mu}^t,G^t)\\
    \leq &  u(\underline{\boldsymbol{\mu}}^t,\tilde{\mathcal{A}}^{t})+2BMLD^{\frac{\alpha}{2}}h^{-\alpha}_T-u(\boldsymbol{\mu}^t,G^t)\\
    \leq &  u(\underline{\boldsymbol{\mu}}^t,\tilde{\mathcal{A}}^{t})+3BMLD^{\frac{\alpha}{2}}h^{-\alpha}_T-u(\overline{\boldsymbol{\mu}}^t,G^t)\\
    \leq & 3BMLD^{\frac{\alpha}{2}}h^{-\alpha}_T+At^{\theta}
\end{align}
by the definition of $\tilde{\mathcal{A}}^t$ and (\ref{eq:def_L}).
With $h_T = \lceil T^{\gamma}\rceil$, we have
\begin{align}
    u(\boldsymbol{\mu}^t\mathcal{A}^{t*})-u(\boldsymbol{\mu}^t,G^t) & \leq 3BMLD^{\frac{\alpha}{2}} \lceil T^{\gamma} \rceil^{-\alpha}+At^{\theta} \\
    & \leq 3BMLD^{\frac{\alpha}{2}} T^{- \alpha \gamma}+At^{\theta}.
\end{align}
Thus, we complete the regret bound for $\mathbb{E}[R_n]$ as follows:
\begin{align}
    \mathbb{E}[R_n] & \leq \sum^T_{t=1} (3BMLD^{\frac{\alpha}{2}} T^{- \alpha \gamma}+At^{\theta})\\
    & \leq 3BMLD^{\frac{\alpha}{2}} T^{1- \alpha \gamma} + \frac{A}{1+\theta} T^{1+\theta}.
\end{align}

\section{Proof of Lemma \ref{lemma:mu12}} \label{proof_lemma_mu12}
First, we suppose that 
\begin{align}
    \mu_1(p^t_{\nu}) - \mu_2(p^t_{\nu}) < \frac{H(t)}{BM}, \ \forall \nu \in G. \label{eq:assump}
\end{align}
We note that the following equation holds and will be used for analysis later.
\begin{align}
    \prod^N_{i=1}a_i - \prod^N_{i=1}b_i = \sum^N_{i=1}a_1\dots a_{i-1}(a_i-b_i)b_{i+1}\dots b_N. \label{eq:series}
\end{align}
Without loss of generality, we can index the elements in $G$ by $G = \{1,2,3\dots,|G|\}$. Then we define a function $c(\boldsymbol{\mu}_1,\boldsymbol{\mu}_2,G,Y) \triangleq u(\boldsymbol{\mu}_1,G)-u(\boldsymbol{\mu}_2,G)$, i.e.,
\begin{align}
    & c(\boldsymbol{\mu}_1, \boldsymbol{\mu}_2,G,Y^t) \nonumber\\ = &\sum^{|G|}_{s = Y^t} \sum_{G' \subseteq G, |G'| =s} \big \{\prod_{\nu \in G'} \mu_1 (p^t_{\nu}) \prod_{\nu \in G \backslash G'} (1-\mu_1(p^t_{\nu})) \nonumber\\
    & - \prod_{\nu \in G'} \mu_2 (p^t_{\nu}) \prod_{\nu \in G \backslash G'} (1-\mu_2(p^t_{\nu})\big \}.
\end{align}
We first define a function $f(G_1,G_2,\nu)$ as follows
\begin{align}
    f(G_1,G_2,\nu) \triangleq \prod_{\nu_1 \in G_1,\nu_1 < \nu}(1-\mu_1(p^t_{\nu_1})) \prod_{\nu_1 \in G_2,\nu_1 < \nu}\mu_1(p^t_{\nu_1}) \times \nonumber \\\prod_{\nu_2 \in G_1,\nu_2 > \nu}(1-\mu_2(p^t_{\nu_2}))\prod_{\nu_2 \in G_2,\nu_2 > \nu}\mu_1(p^t_{\nu_2}) \{\mu_1(p^t_{\nu})-\mu_2(p^t_{\nu})\}; \nonumber
\end{align}
and a set $\mathcal{S}(G,Y^t) \triangleq \{ (G_1,G_2,\nu):|G_1|=|G|-Y^t,|G_2|=Y^t,G_1\cup G_2 = G,\nu \in G_2\}$.

 We now show that $c(\boldsymbol{\mu}_1,\boldsymbol{\mu}_2,G,Y)$ can be rewritten as 
\begin{align}
  c(\boldsymbol{\mu}_1,\boldsymbol{\mu}_2,G,Y^t) =  \sum_{(G_1,G_2,\nu) \in \mathcal{S}(G,Y^t)}f(G_1,G_2 \backslash \{\nu\},\nu). \label{eq:c}
\end{align}
If $Y^t = |G|$, by equation (\ref{eq:series}), we have
\begin{align}
    & c(\boldsymbol{\mu}_1,\boldsymbol{\mu}_2,G,|G|) =  \prod_{\nu \in G} \mu_1 (p^t_{\nu}) -\prod_{\nu \in G} \mu_2 (p^t_{\nu}) \\
    = & \sum_{\nu \in G} \prod_{\nu_1 \in G, \nu_1 < \nu}\mu_1(p^t_{\nu_1})\prod_{\nu_2 \in G, \nu_2 > \nu}\mu_2(p^t_{\nu_2})\{\mu_1(p^t_{\nu})-\mu_2(p^t_{\nu})\} \\
    = & \sum_{(G_1,G_2,\nu) \in \mathcal{S}(G,|G|)} f(G_1, G_2 \backslash \{\nu\},\nu),
\end{align}
which implies that (\ref{eq:c}) holds for $Y^t = |G|$. 

Now we suppose that Equation (\ref{eq:c}) holds for $Y^t$, then we consider the case of $Y^t-1$. By the definition of function $c$, we have 
\begin{align}
      c(\boldsymbol{\mu}_1, &\boldsymbol{\mu}_2,G,Y^t-1) = c(\boldsymbol{\mu}_1, \boldsymbol{\mu}_2,G,Y^t) \nonumber\\
     &+ \sum_{G' \subseteq G, |G'| =Y^t-1} \big \{\prod_{\nu \in G'} \mu_1 (p^t_{\nu}) \prod_{\nu \in G \backslash G'} (1-\mu_1(p^t_{\nu}))\nonumber \\
     &- \prod_{\nu \in G'} \mu_2 (p^t_{\nu}) \prod_{\nu \in G \backslash G'} (1-\mu_2(p^t_{\nu})\big \}. \label{eq:Y-1}
\end{align}
Then, by using (\ref{eq:series}) and the definition of function $f$ and set $\mathcal{S}$, we can write the second term of (\ref{eq:Y-1}) as $\sum_{(G_1,G_2,\nu) \in \mathcal{S}(G,Y^t-1)}f(G_1,G_2 \backslash \{\nu\},\nu)- \sum_{(G_1,G_2,\nu) \in \mathcal{S}(G,|G|-Y^t+1)}f(G_2 \backslash \{\nu\},G_1,\nu)$.

For each $(G_1,G_2,\nu) \in \mathcal{S}(G,|G|-Y^t+1)$, the corresponding $(G_2 \backslash \{\nu\},G_1 \cup \{ \nu \}, \nu)$ is also in $\mathcal{S}(G,Y^t)$. Thus, we have
\begin{align}
    &\sum_{(G_1,G_2,\nu) \in \mathcal{S}(G,|G|-Y^t+1)}f(G_2 \backslash \{\nu\},G_1,\nu)\\  =  &\sum_{(G_1,G_2,\nu) \in \mathcal{S}(G,Y^t)}f(G_1,G_2 \backslash \{\nu\},\nu)= c(\boldsymbol{\mu}_1,\boldsymbol{\mu}_2,G,Y^t).
\end{align}
It follows that 
\begin{align}
    c(\boldsymbol{\mu}_1, &\boldsymbol{\mu}_2,G,Y^t-1) = \sum_{(G_1,G_2,\nu) \in \mathcal{S}(G,Y^t-1)}f(G_1,G_2 \backslash \{\nu\},\nu)
\end{align}
which implies that (\ref{eq:c}) holds for all $1\leq Y^t \leq |G|$.

With (\ref{eq:assump}) and the definition of function $f$, we have $f(G_1,G_2,\nu) < \frac{H(t)}{BM}$ for all $\nu$. Then we further have
\begin{align}
    & u(\boldsymbol{\mu}_1,G)-u(\boldsymbol{\mu}_2,G) \leq  \sum_{(G_1,G_2,\nu) \in \mathcal{S}(G,Y^t)} \frac{H(t)}{BM} \\
    = & \binom{|G|}{Y^t}Y^t\frac{H(t)}{BM}= \binom{|G|-1}{Y^t-1}|G|\frac{H(t)}{BM} \leq H(t)
\end{align}
which contradicts $u(\boldsymbol{\mu}_1,G) \geq u(\boldsymbol{\mu}_2,G)+H(t)$, i.e.,  there exits $\nu \in G$ such that $\mu_1(p^t_{\nu}) \geq \mu_2(p^t_{\nu})+\frac{H(t)}{BM}$.

\section{Proof of Lemma \ref{lemma:optimalset}} \label{proof_lemma_optimalset}
For a fixed integer $n_g$, we suppose $\mathcal{A}_1$ is the optimal set with cardinality $n_g$ where $i \notin \mathcal{A}_1$ and $1\leq i \leq n_g$. Thus, there exists a $j \in \mathcal{G}_1$ such that $j > n_g$. We construct a set $\mathcal{A}_2 = (\mathcal{A}_1 \backslash \{j\}) \cup \{i\}$, where $\mathcal{A}_1 \backslash \{j\} = \mathcal{A}_2 \backslash \{i\}$. Then, we have
$u(\boldsymbol{\mu}^t,\mathcal{A}_2) = \textrm{Pr}(\sum_{\nu \in \mathcal{A}_2} q^t_{\nu} \geq Y^t) - \eta n_g = \mu(p^t_i)\textrm{Pr}(\sum_{\nu \in \mathcal{A}_2\backslash \{i\}} q^t_{\nu} \geq Y^t-1) + (1-\mu(p^t_i))\textrm{Pr}(\sum_{\nu \in \mathcal{A}_2 \backslash \{i\}} q^t_{\nu} \geq Y^t) -\eta n_g$ and $u(\boldsymbol{\mu}^t,\mathcal{A}_1) = \mu(p^t_j)\textrm{Pr}(\sum_{\nu \in \mathcal{A}_1\backslash \{j\}} q^t_{\nu} \geq Y^t-1) + (1-\mu(p^t_j))\textrm{Pr}(\sum_{\nu \in \mathcal{A}_1 \backslash \{j\}} q^t_{\nu} \geq Y^t) - \eta n_g$. Then, we have $u(\boldsymbol{\mu}^t,\mathcal{A}_2) - u(\boldsymbol{\mu}^t,\mathcal{A}_1)=(\mu(p^t_i)-\mu(p^t_j))(\textrm{Pr}(\sum_{\nu \in \mathcal{A}_2\backslash \{i\}} q^t_{\nu} \geq Y^t-1)-\textrm{Pr}\{\sum_{\nu \in \mathcal{A}_2 \backslash \{i\}} q^t_{\nu} \geq Y^t)\}\geq  0$ which is a contradiction. 
\end{document}